\documentclass[10pt, Conference, letterpaper]{IEEEtran}
\IEEEoverridecommandlockouts
\usepackage{cite}
\usepackage[T1]{fontenc}
\usepackage{aecompl}
\usepackage[numbers,sort&compress]{natbib}
\usepackage{amsmath,amssymb,amsfonts}
\usepackage{amsthm}
\usepackage{float}
\usepackage{booktabs}
\makeatletter
\renewcommand{\maketag@@@}[1]{\hbox{\m@th\normalsize\normalfont#1}}
\makeatother
\usepackage{graphicx}
\usepackage{verbatim}
\usepackage{subcaption}
\usepackage{textcomp}
\usepackage{xcolor}
\usepackage{stfloats}
\usepackage{tabularx}
\usepackage{array}
\usepackage{makecell}
\usepackage{url}
\makeatother
\usepackage{textcomp}
\usepackage[marginal]{footmisc}
\usepackage{geometry}
\usepackage{setspace}
\allowdisplaybreaks[4]
\usepackage{algpseudocode}
 \usepackage[linesnumbered,ruled]{algorithm2e}

\def\BibTeX{{\rm B\kern-.05em{\sc i\kern-.025em b}\kern-.08em
    T\kern-.1667em\lower.7ex\hbox{E}\kern-.125emX}}
\newtheorem{myDef}{Definition}
\newtheorem{myTheo}{Theorem}

\newtheorem{myCoro}{Corollary}
\begin{document}
\title{NOMA-Assisted Multi-BS MEC Networks for Delay-Sensitive and Computation-Intensive IoT Applications}
\author{Yuang Chen, \IEEEmembership{Graduate Student Member}, Fengqian Guo, Chang Wu, Mingyu Peng, \IEEEmembership{Graduate Student Member}, \\ Hancheng Lu, \IEEEmembership{Senior Member, IEEE}, and Chang Wen Chen, \IEEEmembership{Life Fellow, IEEE}
\thanks{This work was supported in part by the National Science Foundation of China under Grant U21A20452, in part by Hong Kong Research Grants Council (GRF-15213322, GRF-15229423). Yuang Chen, Fengqian Guo, Chang Wu, Mingyu Peng, and Hancheng Lu are with the Laboratory of Future Networks, University of Science and Technology of China (USTC), Hefei, P. R. China (e-mail: \{yuangchen21, changwu, mypeng\}@mail.ustc.edu.cn; fqguo@ustc.edu.cn; hclu@ustc.edu.cn). Yuang Chen and Hancheng Lu are also with Deep Space Exploration Laboratory, Hefei 230088, China. Chang Wen Chen is with the Department of Computing, The Hong Kong Polytechnic University, Hong Kong, China (e-mail: changwen.chen@polyu.edu.hk).}}
\maketitle
\begin{abstract}
The burgeoning and ubiquitous deployment of the Internet of Things (IoT) landscape struggles with ultra-low latency demands for computation-intensive tasks in massive connectivity scenarios. In this paper, we propose an innovative uplink non-orthogonal multiple access (NOMA)-assisted multi-base station (BS) mobile edge computing (BS-MEC) network tailored for massive IoT connectivity. To fulfill the quality-of-service (QoS) requirements of delay-sensitive and computation-intensive IoT applications, we formulate a joint task offloading, user grouping, and power allocation optimization problem with the overarching objective of minimizing the system's total delay, aiming to address issues of unbalanced subchannel access, inter-group interference, computational load disparities, and device heterogeneity. To effectively tackle this problem, we first reformulate task offloading and user grouping into a non-cooperative game model and propose an exact potential game-based joint decision-making (EPG-JDM) algorithm, which dynamically selects optimal task offloading and subchannel access decisions for each IoT device based on its channel conditions, thereby achieving the Nash Equilibrium. Then, we propose a majorization-minimization (MM)-based power allocation algorithm, which transforms the original subproblem into a tractable convex optimization paradigm. Extensive simulation experiments demonstrate that our proposed EPG-JDM algorithm significantly outperforms state-of-the-art decision-making algorithms and classic heuristic algorithms, yielding performance improvements of up to 19.3\% and 14.7\% in terms of total delay and power consumption, respectively.
\end{abstract}
\begin{IEEEkeywords}
Non-orthogonal Multiple Access (NOMA); Internet of Things (IoT); Task Offloading; Game Theory; Mobile Edge Computing (MEC).
\end{IEEEkeywords}

\section{Introduction}

\IEEEPARstart{W}{ith} the ubiquitous deployment of fifth-generation (5G) mobile communication technology worldwide, the Internet-of-Things (IoT) has entered a stage of large-scale commercialization, encompassing emerging applications such as intelligent housing, wearable devices, urban infrastructure, and smartphones \cite{10879488, chowdhury2022survey, el2019survey, 9319633, 7902207}. Remarkably, over 127 new IoT devices are connected to the Internet every second, and it is estimated that by 2030, the number of connected IoT devices will reach to nearly 125 billion \cite{9319633}. According to the forecast by the International Data Company (IDC), the amount of data generated by IoT devices has exceeded $80$ZB in 2025 \cite{ARO2023_IoTData}. However, identifying and orchestrating such a massive number of heterogeneous IoT devices has become an increasingly challenging task \cite{7902207, chen2025aoiaware, 11037391}. The rapid proliferation of active IoT terminal devices globally has resulted in an exponential increase in data traffic. The generation of such massive data volumes introduces a surge in computing tasks, which are typically delay-sensitive and computation-intensive (DSCI) characteristics in real-world applications \cite{9667454, sun2024collaborative, 10517901}. However, due to constraints in hardware and battery capacity, most IoT devices pose very limited computing power, making it difficult to effectively meet the demands of users for DSCI tasks \cite{chen2025aoiaware, chowdhury2022survey}. As a result, ensuring the smooth execution of IoT networks with DSCI tasks has become an urgent issue to be addressed.

\vspace{-0.2em}

\par Generally, IoT devices necessitate wirelessly offloading the DSCI tasks requested by users to resource-rich servers for further processing \cite{10460318}. Traditional cloud computing paradigms address this by offloading DSCI tasks to the cloud center \cite{chen2025aoiaware, chen2025dmsa, 10876766}. However, due to the dispersed distribution of massive IoT access, centralized computing paradigms suffer from low transmission efficiency \cite{10460318}. Moreover, the centralized deployment of DSCI tasks in cloud centers leads to significantly increased cloud workloads, which not only reduces system resource utilization but also results in higher network delays. In this context, mobile edge computing (MEC) paradigms have gained widespread adoption, aiming to migrate computing and caching resources from remote cloud data centers to the network edge, closer to users and data sources \cite{chen2025aoiaware, chen2025dmsa, 10876766}. This physical spatial proximity adaptation has significant advantages over geographically dispersed IoT devices in reducing transmission delay, enhancing network reliability, and reducing network bandwidth pressure. Although offloading DSCI tasks to MEC servers can effectively enhance quality-of-service (QoS) provisioning performance, the surge of large-scale IoT access coupled with high-bandwidth traffic generated by applications like Internet-of-Video Things (IoVT), presents a formidable challenge \cite{10430407, 9129776}. These services now constitute a significant portion of network traffic and impose stringent requirements on delay and reliability \cite{9129776, 10430407}.

\vspace{-1em}

\subsection{Research Motivations and Challenges}

\par The development of IoT-enabled MEC networks with DSCI tasks hinges on tailored effective edge offloading and wireless transmission schemes. While extensive studies have established a solid foundation for facilitating massive IoT connectivity, some significant practical challenges remain, as follows:

\par (1) \textbf{Existing multi-access technologies typically neglect the interdependency between task offloading and user grouping.} As IoT devices with DSCI tasks connecting to MEC networks, the uplink transmission of high-volume DSCI tasks to servers inevitably encounters spectrum resource bottlenecks \cite{9760218,8606230}. Traditional orthogonal multiple access (OMA) technologies struggle to meet the offloading and delay requirements of DSCI tasks in resource-constrained networks. Non-orthogonal multiple access (NOMA) has presented the potential to enhance system spectral efficiency (SE) and gained significant attention by employing superposition encoding to combine signals to access the same resource block in the power domain or code domain \cite{10355071,10908860,10430407}. At the receiver, NOMA utilizes successive interference cancellation (SIC) to decode the superimposed signal for efficient multi-device communications \cite{10355071,10908860,10430407}.

\par However, existing studies have numerous limitations. On the one hand, the user grouping schemes of most NOMA systems are based on their channel qualities, where the number of users accessing each sub-channel is typically considered as identical \cite{10355071, 10214196, 10050140, 9393794}. Such user grouping strategies fail to adequately consider the interdependency between task offloading and user grouping, as well as the impacts of interference, power consumption, and delay requirements on strategy selection. On the other hand, NOMA systems relying on SIC technology may cause serious inter-device interference when transmitting data from different devices simultaneously \cite{10355071,10908860,10430407,10214196, 10050140}. Moreover, the user grouping schemes in existing NOMA systems rarely account for the heterogeneity of IoT devices, such as different delay demands for DSCI tasks \cite{10355071,10908860,10430407,10214196, 10050140}.

\par (2) \textbf{IoT access with DSCI tasks in multi-BS-MEC systems significantly exacerbates the complexity of task offloading and resource allocation.} The MEC networks supporting IoT connectivity with DSCI tasks generate huge volumes of transmission tasks and offloaded data \cite{chen2025aoiaware, 9667454}, imposing heavy pressure on communication and computation resources \cite{10361531, 8123913, 9760218}. Due to the limited transmission bandwidth and computing capacity, single-BS architectures are usually unable to fulfill the QoS requirements of DSCI tasks \cite{chen2025aoiaware}, which are also closely related to the number of subchannel resources and devices \cite{9393794}. As a result, single-BS architectures cannot flexibly handle diverse task offloading demands, leading to poor resource utilization and system performance.

\par Although the multi-BS architectures promise to alleviate single-point congestion by distributing the computational load, they introduce new challenges in dynamic wireless networks. On the one hand, the complexity of task offloading and user grouping becomes significantly aggravated. The multi-BS-MEC networks not only need to determine the optimal user grouping decision among multiple sub-channels for uplink transmission, but also select the optimal offloading target MEC server \cite{9393794, 10430407,chen2025aoiaware}, leading to intractable coupling and complexity among user grouping, task offloading, and resource allocation problems. On the other hand, inter-BS interference and coordination control issues are also prominent under NOMA mechanisms \cite{10430407,11089502}, as spectrum resource sharing among multiple BSs easily causes severe co-channel interference. Moreover, the lack of effective coordinated scheduling among multiple BSs leads to reduced resource utilization and uneven delays. Finally, performance optimization in multi-BS-MEC systems typically involves multi-dimensional variables, including BS selection, power allocation, task offloading, and user grouping, which result in an extensive and computationally complex search space for optimal solutions.

\vspace{-0.8em}

\subsection{Main Contributions}

\par In order to effectively overcome the aforementioned challenges, this paper proposes an uplink NOMA-assisted multi-BS-MEC (NMBM) network designed to facilitate the DSCI IoT applications. The proposed NMBM network comprehensively accounts for uneven device access across sub-channels, inter-group interference, imbalanced computational loads, and heterogeneous delay constraints among devices. We formulate a joint optimization problem that integrates task offloading, user grouping, and power allocation, with the objective of minimizing the total delay of the proposed network. To efficiently tackle this NP-hard problem that involves the coupling between discrete task offloading and user grouping and continuous power allocation decision variables, we first adopt a variable decomposition strategy to divide the original problem into two subproblems: (1) the joint task offloading and user grouping subproblem, and (2) the power allocation subproblem. Corresponding optimization algorithms are then developed to solve each subproblem efficiently. The primary contributions of this work are summarized as follows:

\begin{itemize}
  \item To support the IoT applications with DSCI tasks access to MEC networks, we propose an uplink NMBM network, which considers user grouping for IoT devices with non-uniform access on sub-channels, and takes into account inter-group interference, imbalanced computational load, and heterogeneous delay among devices. We formulate a joint optimization problem that accounts for task offloading, user grouping, and power allocation, with the goal of minimizing the system's total delay.

  \item For the subproblem of joint task offloading and user grouping, inspired by game theory, we reformulate it into a non-cooperative game model and propose an exact potential game-based joint decision-making (EPG-JDM) algorithm. Under the proposed game framework, users dynamically select the optimal task offloading and sub-channel access decisions (i.e., user grouping decisions) based on their own channel conditions, taking into account the strategies already determined by other users.

  \item For the power allocation subproblem, we leverage the Majorization-Minimization (MM) algorithm to transform the power allocation problem into a tractable convex optimization problem. Furthermore, we propose a joint alternating optimization algorithm to iteratively address these two subproblems until the objective function converges.

  \item Extensive simulations demonstrate that the proposed algorithms significantly outperform the most representative existing baseline algorithms including Max-Min-based Grouping \cite{10093902}, Gale-Shapley Grouping \cite{7972929}, Nearby BS-based Offloading \cite{10309191}, and Computing Capacity-based Offloading, and our proposed algorithms can achieve at least 19.3\% and 14.7\% performance improvements in terms of total delay and power consumption, respectively.
\end{itemize}

\par The remainder of this paper is organized as follows. Sec. II reviews related works. Sec. III introduces the proposed system model. Sec. IV presents the problem formulation and analysis. Sec. V provides the algorithm designs and solutions. Sec. VI provides extensive performance evaluations against the state-of-the-art comparison schemes. Finally, Sec. VII concludes the paper and explores future directions.

\section{Related Works}

\par This section first reviews the current research status of NOMA-assisted MEC networks under IoT access with DSCI tasks \cite{10321692, 10210075, 10251413, 10128149, 9129776, 10430407, 10336741, 10470427}. Then, we provide a comprehensive survey of task offloading and resource allocation in multi-BS-MEC networks \cite{10470427, 10221786, chen2025aoiaware, wei2025neurdora, 11095639}. Finally, we discuss key limitations of these related works that are closely aligned with our aforementioned challenges.

\vspace{-1.5em}

\subsection{Research Status of NOMA-assisted MEC Networks Under DCSI IoT Applications}

\par Benefiting from the advantages of NOMA in enhancing SE and transmission throughput, NOMA-assisted MEC networks that facilitate the IoT access with DSCI tasks have garnered significant attention \cite{10321692, 10210075, 10251413, 10128149, 9129776, 10430407, 10336741, 10470427}. The authors in \cite{10321692} proposed a two-way relaying NOMA-assisted IoT network, where the issue of the limited lifetime of IoT access points due to device energy constraints was thoroughly investigated. In \cite{10210075}, the authors introduced a jammer-aided covert reconfigurable intelligent surface (RIS)-NOMA system that exploits the NOMA characteristics to conceal the presence of strong signal transmissions to IoT receivers. To flexibly and accurately identify the usage of target frequencies, the authors in \cite{10251413} investigated an adaptive NOMA-based spectrum sensing approach for uplink IoT access networks. In \cite{10128149}, RIS technology was introduced to enhance the offloading capability of MEC systems. To further improve task offloading capability, the authors proposed an RIS-NOMA-assisted MEC network that jointly optimizes channel allocation, beam-bandwidth allocation, offloading rates, and power control schemes. IoVT imposes higher demands on transmission and computing capabilities of wireless networks than traditional IoT systems \cite{9129776}. In \cite{10430407}, we proposed a NOMA-assisted IoVT system combined with MEC and formulated a joint optimization problem for NOMA operations and MEC offloading to minimize the weighted average total system delay. In \cite{10336741}, the authors highlighted that existing downlink NOMA systems lead to redundant transmissions in resource-constrained satellite IoT networks; thus, they introduced the age-of-information (AoI) metric and proposed a content-aware sampling strategy to assess both the freshness and value of status updates. The authors in \cite{10470427} explored the dynamic task offloading problem in NOMA-based IoT networks, which integrates task scheduling and computing resource allocation decisions to support massive IoT connectivity.

\vspace{-1em}

\subsection{Survey on Transmission and Computing Schemes in Multi-BS-MEC Architectures}

\par IoT access with DSCI tasks led to a massive influx of data requiring both transmission and computational resources. To provide more powerful transmission and computation capacities, the multi-BS-MEC architectures have been extensively investigated \cite{10470427, 10221786, chen2025aoiaware, wei2025neurdora, 11095639}. In \cite{10470427}, the authors investigated a NOMA-assisted MEC system, where a dynamic task offloading and resource allocation joint optimization problem was formulated to minimize the system's energy consumption. The authors in \cite{10221786} proposed a RIS-assisted multi-MEC system, which can redirect the IoT devices to less-loaded MEC servers through passive beamforming techniques implemented on RISs when MEC servers are overwhelmed by massive computational tasks. In \cite{chen2025aoiaware}, we proposed an AoI-aware multi-BS-MEC real-time monitoring system to support large-scale industrial IoT access, where a joint task offloading and resource allocation optimization problem was formulated to ensure the information freshness of data packets. In \cite{wei2025neurdora}, the authors proposed a decentralized offloading algorithm for massive IoT-accessed multi-BS scenarios to ensure the effective and fair distribution of BS resources. The authors in \cite{11095639} developed a multi-BS collaborative edge computing model with heterogeneous computing capabilities, where a distributed training scenario was introduced to leverage the federated learning framework for collaborative optimization across multiple BSs while ensuring data privacy.

\vspace{-0.5em}

\subsection{Key Limitations of These Research Works}

\par Although the aforementioned studies on NOMA-assisted multi-MEC networks designed for DSCI IoT access have made significant progress in improving SE, task offloading, and resource allocation \cite{10321692, 10210075, 10251413, 10128149, 9129776, 10430407, 10336741, 10470427}, some key limitations remain. Specifically, these works typically overlook the close relationship between task offloading and user grouping, solely relying on channel quality or fixed device counts for user grouping, and fail to fully consider device heterogeneity (e.g., deadline constraints). This leads to increased interference among devices and makes it difficult to meet the QoS requirements of DSCI tasks. Furthermore, under the multi-BS-MEC architecture \cite{10470427, 10221786, chen2025aoiaware, wei2025neurdora, 11095639}, although distributed optimization \cite{wei2025neurdora} and federated learning \cite{11095639} have been introduced to distribute the load, the system still faces challenges such as strong coupling decision complexity, inter-BS co-channel interference, and insufficient coordinated scheduling, leading to poor resource utilization and higher delay in dynamic wireless environments. These limitations highlight the need for more refined joint optimization frameworks to address the technical challenges posed by DSCI IoT access.

\begin{figure*}[t]
\centering
\includegraphics[scale=0.95]{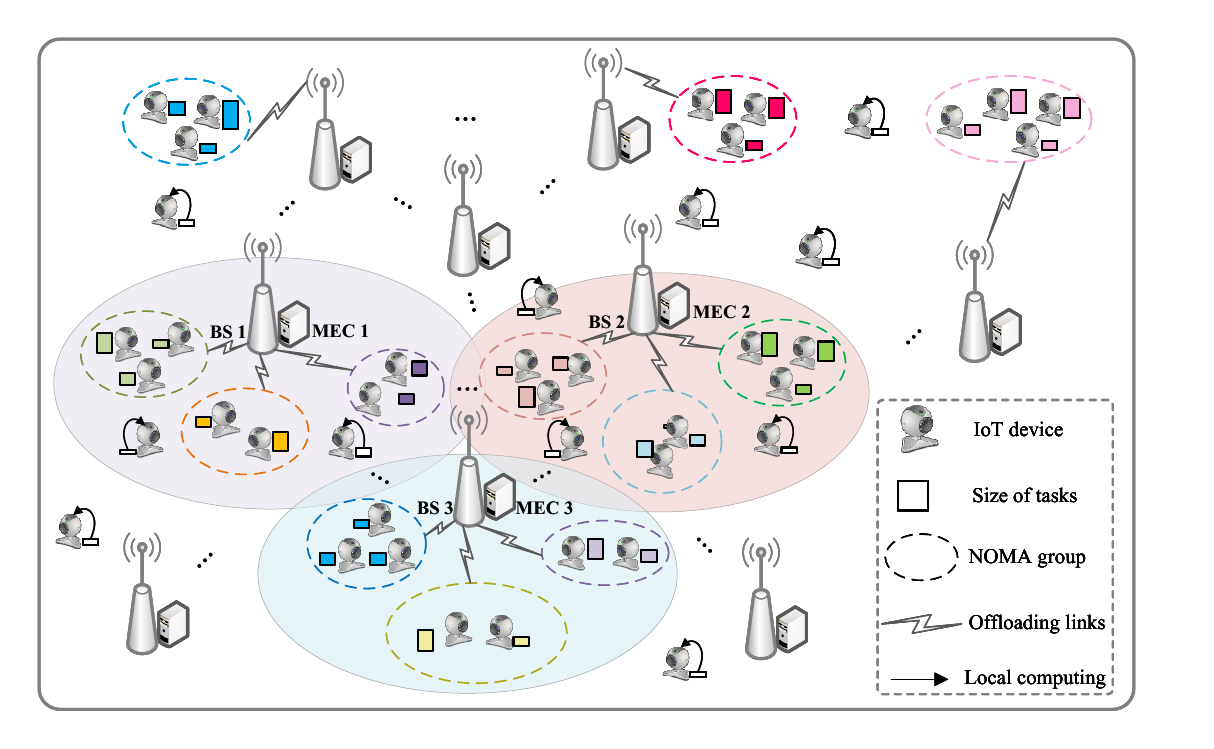}
\caption{\small The proposed Uplink NOMA-assisted multi-BS-MEC network for IoT access with DSCI tasks.}
\vspace{-1.5em}
\label{fig_system_model}
\end{figure*}

\section{System Model}

\vspace{-0.5em}

\par As illustrated in Fig. \ref{fig_system_model}, we propose an uplink NMBM network for IoT applications with DSCI tasks, which consists of $M$ BSs and $N$ single-antenna equipped IoT devices. IoT devices upload their collected sensor data to the MEC servers for processing through uplink wireless transmission. There is an MEC server deployed near each BS to provide sufficient computing resources, and this combination of BSs and MECs is referred to as BS-MEC. Each BS can occupy $G$ subchannels, and orthogonal spectrum resources are allocated among BSs to avoid inter-BS interference \cite{10470427,wei2025neurdora}. The sets of BS-MECs, IoT devices, and sub-channels can be denoted as $m \in \mathcal{M} \triangleq \left\{1, 2, \cdots, M\right\}$, $n \in \mathcal{N} \triangleq \left\{1, 2, \cdots, N\right\}$, and $g \in \mathcal{G} \triangleq \left\{1, 2, \cdots, G\right\}$, respectively. We use $\|\mathcal{U}_{mg}\|$ to denote the set of IoT devices that offload their DSCI tasks to BS-MEC $m$ through subchannel $g$. If the IoT device $n$ offloads its DSCI tasks to BS-MEC $m$ via subchannel $g$, then the channel coefficient between the IoT device $n$ and BS-MEC $m$ can be represented as $h_{n}^{mg}$. For all IoT devices that connect to the identical BS-MEC and jointly occupy the same subchannel, we leverage the NOMA technique to process their data transmission. As shown in Fig. \ref{fig_system_model}, the height of the rectangle next to each IoT device is used to intuitively reflect the amount of their computational demands, also referred to as the amount of data they require to process. Notably, IoT devices in the proposed system model are heterogeneous, which means their requirements for computing resources and delay are different.

\par Under NOMA mechanisms, the system throughput is not only associated with the channel conditions of individual IoT devices but also significantly influenced by the channel conditions of other devices within the same group. At the receiver end, the decoding process must follow a certain priority order, where the signals of IoT devices with the best channel conditions are decoded first, while the signals of other IoT devices are treated as interference. Based on this principle, decoding can be carried out step by step until the signals from all IoT devices within the group are successfully decoded. This decoding strategy ensures that the proposed network can fully utilize devices with better channel conditions to enhance the overall throughput. For the IoT device $n$ that selects to offload its DSCI tasks to the MEC $m$ via subchannel $g$, the throughput of IoT device $n$ can be expressed as

\vspace{-0.4em}

\begin{equation}\label{eq1}
    R_{n}^{mg} = B \log_{2}\left\{1 + \frac{p_{n} |h_{n}^{mg}|^{2}}{\sum\limits_{k \in \mathcal{U}_{mg}, k\neq n, |h_{k}^{mg}| < |h_{n}^{mg}|} p_{k}|h_{k}^{mg}|^{2} + \sigma^{2}} \right\},
\end{equation}
where $p_{k}$ denotes the transmit power of the IoT device $k \in \mathcal{N}$, $\sigma^{2}$ indicates the noise power, and $B$ represents the bandwidth of the occupied subcarrier.

\par Each IoT device faces two choices for processing its DSCI tasks. One is to utilize its own computation capacity to process tasks locally, and the other is to offload its tasks to candidate MEC servers for processing. To characterize whether IoT device $n$ chooses to offload its tasks to BS-MEC $m$, we introduce an offloading decision variable $a_{n,m}$, which can be defined as follows:

\vspace{-0.5em}

\begin{equation}\label{eq2}
    a_{n,m} = \left\{
                \begin{array}{ll}
                  1, & \hbox{offload $n$'s tasks to BS-MEC $m$;} \\
                  0, & \hbox{process tasks locally.}
                \end{array}
              \right.
\end{equation}

\par When the IoT device $n$ processes its tasks locally, the required processing time can be denoted as

\begin{equation}\label{eq3}
     T_{n}^{\rm{local}} = \frac{W_{n}}{f_{n}^{\rm{local}}},
\end{equation}
where $W_{n}$ represents the computational demands, also referred to as the size of the task needed to be processed, and $f_{n}^{\rm{local}}$ indicates the computation capacity of the device $n$ itself.

\par When the IoT device $n$ selects offloading its DSCI tasks to the BS-MEC $m$ via the subchannel $g$, then the transmission delay of the device $n$ can be given by

\begin{equation}\label{eq4}
    T_{n,m,g}^{\rm{trans}} = \frac{W_{n}}{R_{n}^{mg}}.
\end{equation}

\par Since there may be multiple IoT devices offloading their DSCI tasks to the identical BS-MEC $m$, it inevitably leads to the issue of competition for computing resources. Therefore, the computing delay of IoT device $n$ can be expressed as

\begin{equation}\label{eq5}
    T_{n,m}^{\rm{comp}} = \frac{W_{n} + \sum\limits_{i \in \mathcal{N}, i \neq n} a_{i, m}W_{i} }{f_{m}},
\end{equation}
where $f_{m}$ denotes the computation capacity of BS-MEC $m$, and the term $\sum\limits_{i \in \mathcal{N}, i \neq n} a_{i, m}W_{i}$ indicates total data volume of all IoT devices that offload tasks to BS-MEC $m$ except for device $n$. Therefore, the total delay when device $n$ chooses MEC offloading can be denoted as

\begin{equation}\label{eq6}
   \begin{aligned}
        T_{n,m,g}^{\rm{mec}} & = T_{n,m,g}^{\rm{trans}} + T_{n,m}^{\rm{comp}} = \frac{W_{n}}{R_{n}^{mg}} + \frac{W_{n} + \sum\limits_{i \in \mathcal{N}, i \neq n} a_{i, m}W_{i} }{f_{m}}.
   \end{aligned}
\end{equation}

\par Then, the delay expressions $T_{n}$ of IoT device $n$ under these two offloading strategies can be given by

\begin{equation}\label{eq7}
    T_{n} = \left\{
                \begin{array}{ll}
                  \frac{W_{n}}{R_{n}^{mg}} + \frac{W_{n} + \sum\limits_{i \in \mathcal{N}, i \neq n} a_{i, m}W_{i} }{f_{m}}, & \hbox{BS-MEC Offloading;} \\
                  \frac{W_{n}}{f_{n}^{\rm{local}}}, & \hbox{Local Offloading.}
                \end{array}
              \right.
\end{equation}

\par Moreover, to describe the user grouping, we introduce the variables $b_{g,n}$ to indicate whether IoT device $n$ is connected to sub-channel $g$, and $b_{g,n}$ can be specifically defined as

\begin{equation}\label{eq8}
    b_{g,n} = \left\{
                \begin{array}{ll}
                  1, & \hbox{IoT device $n$ access group $g$;} \\
                  0, & \hbox{IoT device $n$ unconnect to group $g$.}
                \end{array}
              \right.
\end{equation}

\section{Problem Formulation and Analysis}

\par Building on the analysis in Sec. III, this section aims to jointly optimize the task offloading, user grouping, and power allocation schemes under the scenarios of device heterogeneity, to minimize the total delay of the proposed uplink NMBM networks. The detailed optimization problem is formulated as follows:

\vspace{-2em}

\begin{subequations}\label{eq9}
\begin{align}
\mathcal{P}1:     & \min_{\{\mathcal{U}_{m,g}, p_{n}\}_{\forall n \in \mathcal{N}, m \in \mathcal{M}, g \in \mathcal{G}}} \sum\limits_{n = 1}^{N} T_{n}, \\
\emph{s.t.} \quad \quad & 0 \leq p_{n} \leq p_{max}, \quad \forall n \in \mathcal{N}, \\
                        & 0 \leq T_{n} \leq D_{n}, \quad \forall n \in \mathcal{N},   \\
                        & \sum\limits_{m = 1}^{M} a_{n,m} \leq 1, \quad n \in \mathcal{N},\\
                        & \sum\limits_{g = 1}^{G} b_{g,n} \leq 1, \quad n \in \mathcal{N},\\
                        & \mathcal{U}_{m,g} \cap \mathcal{U}_{m,g^{\prime}} = \varnothing, \quad \forall g, g^{\prime} \in \mathcal{G}, g \neq g^{\prime},\\
                        & \mathcal{U}_{m,g} \cap \mathcal{U}_{m^{\prime},g} = \varnothing, \quad \forall m, m^{\prime} \in \mathcal{M}, m \neq m^{\prime},
\end{align}
\end{subequations}
where constraint (\ref{eq9}b) denotes the transmission power limit of the IoT device, in which $p_{max}$ indicates the maximum transmit power of each device. Constraint (\ref{eq9}c) represents the delay threshold of DSCI tasks, which means that each task must be completed within that time. Constraints (\ref{eq9}d) and (\ref{eq9}e) indicate that each device can only access a maximum of one BS-MEC and sub-channel during task offloading, respectively. Finally, constraints (\ref{eq9}f) and (\ref{eq9}g) represent that IoT devices accessing different groups and offloading to different BS-MECs do not overlap with each other.

\par The original problem $\mathcal{P}1$ involves a mix of discrete and continuous variables, and extremely complex interdependent coupling between task offloading, user grouping, and power allocation. Thus, $\mathcal{P}1$ is classified as a Mixed Integer Nonlinear Programming (MINLP) problem, and directly tackling $\mathcal{P}1$ presents significant challenges. To simplify the solution of $\mathcal{P}1$, we decompose the discrete and continuous variables for problem solution. We first fix the power allocation strategy $\boldsymbol{p} = \{p_{1}, \cdots, p_{N}\}$, and leverages game theory to design efficient algorithms. Based on the obtained task offloading and user grouping strategies, the power allocation strategy $\boldsymbol{p} = \{p_{1}, \cdots, p_{N}\}$ can be effectively solved, at which point the optimization objective in $\boldsymbol{T} = \{T_{1}, \cdots, T_{N}\}$ is solely related to $\boldsymbol{p} = \{p_{1}, \cdots, p_{N}\}$.

\vspace{-0.5em}

\section{Algorithm Designs and Solutions}

\par In this section, we propose the corresponding algorithms for the joint task offloading and user grouping subproblem, as well as the power allocation subproblem. Then, we develop a joint iterative optimization algorithm to effectively address $\mathcal{P}1$.

\vspace{-1em}

\subsection{Potential Game-Based Joint Task Offloading and User Grouping Scheme}

\par Based on the structure of the joint task offloading and user grouping subproblem, we intend to leverage game theory to tackle it. Let $\pi_{n}$ denote the device $n$'s task offloading and user grouping selection, where $\pi_{n} = \{m, g\} \Leftrightarrow n \in \mathcal{U}_{m,g}$. The strategy profile of all devices can be represented as $\boldsymbol{\pi} = \{\pi_{1}, \cdots, \pi_{N}\}$, and $\pi_{-n}$ denotes the strategies of all devices except $n$. According to the objective function $T_{n}$ formulated in (\ref{eq7}), if the device's local computing resources can satisfy the delay requirements, it computes locally without uplink transmission. Otherwise, it offloads its DSCI tasks to the BS-MEC $m$, requiring a suitable offloading and grouping design. Based on (\ref{eq6}), when the device $n$ offloads its tasks to BS-MEC $m$, the total delay is composed of transmission and computation delays, where the former part is impacted by the intra-group interference, and the second part is related to the computing resource competition from other devices offloaded on the same BS-MEC $m$. Specifically, greater intra-group interference reduces throughput and increases transmission delay. Moreover, when the competition for computing resources among devices intensifies, the computation delay will also increase accordingly. To model these dynamic interactions among IoT devices, we define an interference function as follows:

\begin{equation}\label{eq10}
    \nu_{n} = \sum\limits_{k \in \mathcal{U}_{m,g}, k \neq n, |h_{k}^{mg}| < |h_{n}^{mg}|} p_{k} \left|h_{k}^{mg}\right|^{2} + \sum\limits_{i \in \mathcal{N}, i \neq n} W_{i} a_{i,m},
\end{equation}
where the first term denotes the intra-group interference caused by other IoT devices that select the identical sub-channel for task transmission, and the second term reflects the competition for computing resources triggered by other devices that offload their tasks to the identical BS-MEC.

\par Based on the above analysis, we transform the objective of minimizing the total delay (\ref{eq9}a) into minimizing the sum of interference from all devices. Then, $\mathcal{P}1$ can be reconstructed into a non-cooperative game model, where each IoT device acts as an independent game player, characterized by \texttt{\textbf{absolute rationality}} and \texttt{\textbf{self-interest}}, aiming to minimize its own interference. In this case, the game problem can be expressed using the tuple $\Gamma = \left(\mathcal{N}, \mathcal{M}, \mathcal{G}, \{h_{n}\}_{n \in \mathcal{N}}, \{\nu_{n}\}_{n \in \mathcal{N}}\right)$, where $\mathcal{N}$ denotes the set of game players, $\nu_{n}$ indicates the defined interference function of device $n$. Notably, $\Gamma$ is a deterministic strategy game, which means that the strategies of players are explicit and do not involve probabilistic transitions \cite{wang2010game}. In game theory, when all players reach a state where no player can unilaterally change their strategy, we refer to this state as a Nash Equilibrium (NE):

\begin{myDef}\label{def1}
   \textbf{(Nash Equilibrium)} For the game player $\forall n \in \mathcal{N}$, the system reaches the NE state when the strategy $\boldsymbol{\pi}^{\ast} = \left(\pi_{1}^{\ast}, \pi_{2}^{\ast}, \cdots, \pi_{N}^{\ast}\right)$ satisfies the following conditions, as follows:
   \begin{equation}\label{eq11}
       \nu_{n}\left(n \in \mathcal{U}^{\pi_{n}^{\ast}},\boldsymbol{\pi}^{\ast}_{-n}\right) \leq \nu_{n}\left(n \in \mathcal{U}_{m,g},\boldsymbol{\pi}^{\ast}_{-n}\right),
   \end{equation}
   where $n \in \mathcal{U}^{\pi_{n}^{\ast}}$ represents that the task offloading and grouping scheme belongs to $\pi_{n}^{\ast}$.
\end{myDef}

\par The system only converges when the game reaches the NE state; however, not all games can achieve the NE state. To guarantee the system's convergence, we transform the original problem $\mathcal{P}1$ into an Exact Potential Game (EPG) model, which ensures the existence of an NE. The definition of the EPG can be given as follows:

\begin{myDef}\label{def2}
     \textbf{(Exact Potential Game)} For $\forall n \in \mathcal{N}$, $\forall g, g^{\prime} \in \mathcal{G}$, and $\forall m, m^{\prime} \in \mathcal{M}$, if there exists a function $\Phi_{n}$ that satisfies the condition (\ref{eq12}), the game $\Gamma = \left(\mathcal{N}, \mathcal{M}, \mathcal{G}, \{h_{n}\}_{n \in \mathcal{N}}, \{\nu_{n}\}_{n \in \mathcal{N}}\right)$ is a strictly EPG and the function $\Phi_{n}$ represents the exact potential function of the EPG $\Gamma$, as follows:
     \begin{equation}\label{eq12}
         \begin{aligned}
              & \Phi_{n}\left(n \in \mathcal{U}_{m,g}, \boldsymbol{\pi}^{\ast}_{-n} \right) - \Phi_{n}\left(n \in \mathcal{U}_{m^{\prime},g^{\prime}}, \boldsymbol{\pi}^{\ast}_{-n} \right) \\
              & = \nu_{n}\left(n \in \mathcal{U}_{m,g}, \boldsymbol{\pi}^{\ast}_{-n} \right) - \nu_{n}\left(n \in \mathcal{U}_{m^{\prime},g^{\prime}}, \boldsymbol{\pi}^{\ast}_{-n} \right).
         \end{aligned}
     \end{equation}
\end{myDef}

\par The \textbf{Definition \ref{def2}} reveals the change in potential function $\Phi_{n}$ during the transfer of device $n$ from one offloading-group combination $\left(m, g\right)$ to another offloading-group combination $\left(m^{\prime}, g^{\prime}\right)$, which is equivalent to the change in the original interference function $\nu_{n}$. Although a detailed definition of the EPG's function has been provided in \textbf{Definition \ref{def2}}, there has been no specific proof of whether the game $\Gamma$ satisfies the conditions of an EPG, which can be proven in \textbf{Theorem \ref{theo1}}, as follows:

\begin{myTheo}\label{theo1}
    Based on Definition \ref{def2}, the closed-form expression of the $\Phi_{n}$ is derived by (\ref{eq14}), which makes Eq. (\ref{eq12}) established.
\end{myTheo}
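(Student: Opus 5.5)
The plan is to verify Eq.~(\ref{eq12}) by direct computation. Since (\ref{eq14}) is not visible at this point, I will work with a candidate closed form built from $\nu_n$ in (\ref{eq10}), summed over all players. Write $\nu_n=\nu_n^{\rm I}+\nu_n^{\rm C}$, where $\nu_n^{\rm I}$ is the intra-group SIC interference term and $\nu_n^{\rm C}$ is the BS-MEC computing-competition term. I will take $\Phi_n$ as the sum of a pairwise-interaction part for each term, arranged so that every unordered pair of co-located players contributes once:
\begin{equation*}
\Phi(\boldsymbol{\pi})=\sum_{m\in\mathcal{M}}\sum_{g\in\mathcal{G}}\sum_{\substack{n,k\in\mathcal{U}_{m,g}\\ |h_k^{mg}|<|h_n^{mg}|}} p_k|h_k^{mg}|^2+\frac{1}{2}\sum_{m\in\mathcal{M}}\sum_{n}\sum_{i\neq n} a_{n,m}a_{i,m}W_i .
\end{equation*}
I will use this $\Phi$ as $\Phi_n$ for every $n$, identifying it with the expression in (\ref{eq14}).

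\textbf{Step 1 (split $\Phi$).} Fix $\boldsymbol{\pi}^{\ast}_{-n}$ and let device $n$ move from $(m,g)$ to $(m^{\prime},g^{\prime})$. I will split $\Phi$ into terms that involve $n$ and terms that do not. Terms indexed by pairs $(k,i)$ with $k,i\neq n$ are identical before and after the move, so they cancel in the difference on the left of (\ref{eq12}). Only pairs containing $n$ survive, namely pairs with members of $\mathcal{U}_{m,g}$ or $\mathcal{U}_{m^{\prime},g^{\prime}}$ for the interference part, and pairs with devices offloaded to $m$ or $m^{\prime}$ for the computing part.

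\textbf{Step 2 (intra-group part).} I will collect the surviving terms in which $n$ is the stronger user. These reproduce $\nu_n^{\rm I}(n\in\mathcal{U}_{m,g})-\nu_n^{\rm I}(n\in\mathcal{U}_{m^{\prime},g^{\prime}})$ exactly, by the definition in (\ref{eq10}). I will then check the case $m=m^{\prime}$ separately, where only the subchannel changes, since there the computing part cancels entirely.

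\textbf{Step 3 (computing part).} The surviving computing terms yield $\sum_{i\neq n}W_i(a_{i,m}-a_{i,m^{\prime}})$ from the ordered pairs $(n,i)$. They also yield the mirror terms from the ordered pairs $(i,n)$, each carrying $W_n$.

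\textbf{Main obstacle.} This is the asymmetry of pairwise interactions. In the SIC term, $k$ interferes with $n$ only when $|h_k^{mg}|<|h_n^{mg}|$. The leftover pairs in which $n$ is the weaker user, i.e.\ $n$ interferes with stronger devices, therefore also change $\Phi$ and must be shown to vanish or be absorbed. The computing term has the same difficulty: the weight $W_i$ felt by $n$ differs from the weight $W_n$ felt by $i$. The mirror terms from Step~3 contribute $\tfrac12 W_n(|\mathcal{A}_m|-|\mathcal{A}_{m^{\prime}}|)$, where $\mathcal{A}_m$ denotes the set of devices other than $n$ offloaded to BS-MEC $m$.

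My first attempt at a fix is to redefine the pair weights symmetrically. For the computing part I would use $\tfrac12(W_i+W_n)$, and for the SIC part I would charge each unordered pair only through the stronger member. I would then check whether (\ref{eq14}) is written in such a symmetric form, and adjust the candidate $\Phi$ to match it. If exact cancellation fails in general, there is a fallback: establish (\ref{eq12}) under the paper's implicit normalization, where the offloaded task sizes entering a given server are treated as a common load so that the pairwise weights coincide. If that is not available either, I would conclude only a weighted-potential property.
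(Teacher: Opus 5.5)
Your proposal stops where the proof has to happen. You correctly isolate the leftover terms: the $W_n$-weighted mirror terms of the computing part, and the terms in which $n$ is the weaker SIC user. But no repair can work with one $\Phi$ shared by all players, because in general no such $\Phi$ exists. Put two offloading devices $n,k$ on different subchannels and let them alternate as deviators around $(m_1,m_1)\to(m_2,m_1)\to(m_2,m_2)\to(m_1,m_2)\to(m_1,m_1)$. The deviators' changes in $\nu$ sum to $2(W_n-W_k)$. Likewise, put both at BS-MEC $m$ with $n$ stronger than $k$ on subchannels $g$ and $g'$. The analogous subchannel cycle gives $-(p_k|h_k^{mg}|^2+p_k|h_k^{mg'}|^2)$, since $k$'s moves never change $\nu_k$. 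A common $\Phi$ obeying (\ref{eq12}) would make both sums vanish, so symmetric re-weighting of pairs cannot rescue your candidate. Your fallbacks do not help either: the second cycle involves no task sizes, and because only $n$'s moves contribute, it also rules out a weighted potential. (Also, with the factor $\frac12$ in your candidate, the ordered pairs $(n,i)$ give only $\frac12\sum_{i\neq n}W_i(a_{i,m}-a_{i,m'})$.)

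The paper instead exploits the fact that Definition~\ref{def2} indexes the potential by the deviating player. Its computing term in (\ref{eq13}) is the standard weighted potential $\frac12\sum_u\sum_{i\neq u}\ell_{\{a_{i,m}=a_{u,m}\}}W_iW_u$, reading $\ell$ as ``$i$ and $u$ use the same BS-MEC'', and then divided by $W_n$. Under $n$'s deviation this potential changes by $W_n$ times the change in the computing part of $\nu_n$, which is your fallback property. The ordered pairs containing $n$ (cases 1--2 before (\ref{eq14})) contribute $2W_n\sum_{i\neq n}\ell_{\{a_{i,m}=a_{n,m}\}}W_i$, so the prefactor $\frac{1}{2W_n}$ returns exactly that part of $\nu_n$. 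The pairs avoiding $n$ (case 3) form $\mathcal{C}(n\in\mathcal{U}_{m,g})$. This $n$-dependent normalization is the idea missing from your plan, and it disposes of your mirror term $\frac12W_n(|\mathcal{A}_m|-|\mathcal{A}_{m'}|)$.

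On the SIC side, however, the paper does not overcome your obstacle. It puts everything except $\sum_{i\in\mathcal{U}_{m,g},|h_i^{mg}|<|h_n^{mg}|}p_i|h_i^{mg}|^2$ into $\mathcal{T}(n\in\mathcal{U}_{m,g})$ and declares it independent of $(m,g)$. Yet between (\ref{eq16}) and (\ref{eq18}) the $i=n$ summands of $2\sum_{u\neq n}H(u)$ are dropped, and these are exactly your weaker-user terms. What remains, e.g.\ $-\sum_{i\notin\mathcal{U}_{m,g}}p_i|h_i^{mg}|^2$ and a double sum over $\mathcal{U}_{m,g}$, still moves with $n$'s choice. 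Concretely, suppose the only other device $k$ sits on subchannel $g$ of BS-MEC $m$ with $|h_k^{mg}|<|h_n^{mg}|$, and $n$ moves to the empty subchannel $g'$ of that BS-MEC. This changes $\nu_n$ by $p_k|h_k^{mg}|^2$, but changes the right-hand side of (\ref{eq13}) by $p_k|h_k^{mg}|^2+p_k|h_k^{mg'}|^2$ (or by $2p_k|h_k^{mg}|^2$ if its middle sum is read over each device's own link).

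So there is no SIC argument in the paper for you to match. With a player-indexed $\Phi_n$, (\ref{eq12}) asks only that $\Phi_n-\nu_n$ be independent of $\pi_n$, which is trivially true for $\Phi_n=\nu_n$. That independence is what you would have to verify for any closed form you adopt.
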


\begin{proof}
    According to the general deductive method of game theory \cite{wang2010game, 10181238}, we give the expression $\Phi_{n}$ that may satisfy (\ref{eq12}) based on (\ref{eq10}), as follows:

    \vspace{-0.5em}

    \begin{equation}\label{eq13}
        \begin{aligned}
             & \Phi_{n}\left(n \in \mathcal{U}_{m,g}, \boldsymbol{\pi}^{\ast}_{-n} \right) \\
             & = \frac{1}{2 W_{n}} \sum\limits_{u = 1}^{N} \sum\limits_{i = 1, i \neq u}^{N} \ell_{\{a_{i,m} = a_{u,m}\}} W_{i} W_{u} - \sum\limits_{i = 1, i \neq n} p_{i} |h_{i}^{mg}|^{2} \\
             & + 2 \sum\limits_{u = 1}^{N} \sum\limits_{\substack{i \in \mathcal{U}_{mg}, \\ |h_{i}^{mg}| < |h_{u}^{mg}|}} \ell_{\{a_{i,m} = a_{u,m}\}} \ell_{\{b_{g,i} = b_{g,u}\}}  p_{i} |h_{i}^{mg}|^{2},
        \end{aligned}
    \end{equation}
    where $\ell_{\rm{condition}}$ denotes a binary indicator that takes the value 1 when the indicated $\rm{condition}$ is true, and takes the value 0 otherwise. The term $\sum\limits_{u = 1}^{N} \sum\limits_{i = 1, i \neq u}^{N} \ell_{\{a_{i,m} = a_{u,m}\}} W_{i} W_{u}$ can be discussed in three cases as follows:

    \begin{enumerate}
        \item If $u = n, i \neq n, \Rightarrow \sum\limits_{i \neq n} \ell_{\{a_{i,m} = a_{n,m}\}} W_{i} W_{n}$;

        \item If $i = n, u \neq n \Rightarrow \sum\limits_{u \neq n} \ell_{\{a_{u,m}= a_{n,m\}}} W_{n} W_{u}$;

        \item If $i \neq n, u \neq n \Rightarrow \sum\limits_{u \neq n} \sum\limits_{\substack{i \neq u \\ i \neq n}} \ell_{\{a_{i,m} = a_{u,m}\}} W_{i} W_{u}$.
    \end{enumerate}

    \vspace{-0.2em}

    Therefore, we can derive that

    \vspace{-0.5em}

    \begin{equation}\label{eq14}
        \begin{aligned}
            & \quad \frac{1}{2 W_{n}} \sum\limits_{u = 1}^{N} \sum\limits_{i = 1, i \neq u}^{N} \ell_{\{a_{i,m} = a_{u,m}\}} W_{i} W_{u}\\
            & = \frac{1}{2 W_{n}} \bigg\{\sum\limits_{i \neq n} \ell_{\{a_{i,m} = a_{n,m}\}} W_{i} W_{n} +  \sum\limits_{u \neq n} \ell_{\{a_{u,m}= a_{n,m\}}} W_{n} W_{u}\bigg\} \\
            & \quad + \underbrace{\frac{1}{2 W_{n}} \sum\limits_{u \neq n} \ell_{\{a_{u,m}= a_{n,m\}}} W_{n} W_{u}}_{\mathcal{C}\left(n \in \mathcal{U}_{m,g}\right)} \\
            & = \sum\limits_{\substack{i = 1 \\ i \neq n}} \ell_{\{a_{i,m} = a_{n,m}\}} W_{i} + \mathcal{C}\left(n \in \mathcal{U}_{m,g}\right).
        \end{aligned}
    \end{equation}

    We define $H(u)$ as follows:

    \begin{equation}\label{eq15}
        \begin{aligned}
            H(u) = \sum\limits_{\substack{i \in \mathcal{U}_{mg}, \\ |h_{i}^{mg}| < |h_{u}^{mg}|}} \ell_{\{a_{i,m} = a_{u,m}\}} \ell_{\{b_{g,i} = b_{g,u}\}}  p_{i} |h_{i}^{mg}|^{2}.
        \end{aligned}
    \end{equation}

    Then, we can derive that

    \begin{equation}\label{eq16}
         \begin{aligned}
             & \quad \quad - \sum\limits_{i = 1, i \neq n} p_{i} |h_{i}^{mg}|^{2} + 2 \sum\limits_{u = 1}^{N} H(u) \\
             &  = - \sum\limits_{i = 1, i \neq n} p_{i} |h_{i}^{mg}|^{2} + 2H(n) + 2 \sum\limits_{u \neq n}^{N} H(u),
         \end{aligned}
    \end{equation}
    where we can further derive that

    \begin{equation}\label{eq17}
        \begin{aligned}
            & - \sum\limits_{i = 1, i \neq n} p_{i} |h_{i}^{mg}|^{2} + 2H(n) \\
            & = - \sum\limits_{i = 1, i \neq n} p_{i} |h_{i}^{mg}|^{2} + 2 \sum\limits_{\substack{i \in \mathcal{U}_{mg}, \\ |h_{i}^{mg}| < |h_{u}^{mg}|}}  p_{i} |h_{i}^{mg}|^{2} \\
             & = \sum\limits_{\substack{i \in \mathcal{U}_{mg}, \\ |h_{i}^{mg}| < |h_{u}^{mg}|}}  p_{i} |h_{i}^{mg}|^{2} - \sum\limits_{\substack{i \in \mathcal{U}_{m,g} \\ i \neq n}} p_{i} |h_{i}^{mg}|^{2} - \sum\limits_{\substack{i \notin \mathcal{U}_{m,g}}} p_{i} |h_{i}^{mg}|^{2}\\
             & \quad + \sum\limits_{\substack{i \in \mathcal{U}_{mg}, \\ |h_{i}^{mg}| < |h_{u}^{mg}|}}  p_{i} |h_{i}^{mg}|^{2} \\
             & = - \!\!\! \sum\limits_{\substack{i \in \mathcal{U}_{mg}, \\ |h_{i}^{mg}| > |h_{u}^{mg}|}}  \!\!\!\! p_{i} |h_{i}^{mg}|^{2}  - \!\! \sum\limits_{\substack{i \notin \mathcal{U}_{m,g}}} p_{i} |h_{i}^{mg}|^{2} + \!\!\!\!\!\! \sum\limits_{\substack{i \in \mathcal{U}_{mg}, \\ |h_{i}^{mg}| < |h_{u}^{mg}|}}  \!\!\!\!\!\!\! p_{i} |h_{i}^{mg}|^{2}. \\
        \end{aligned}
    \end{equation}

    \vspace{-1em}

    Based on (\ref{eq16}) and (\ref{eq17}), we can obtain that

    \vspace{-1.5em}

    \begin{equation}\label{eq18}
        \begin{aligned}
            & \quad \quad - \sum\limits_{i = 1, i \neq n} p_{i} |h_{i}^{mg}|^{2} + 2 \sum\limits_{u = 1}^{N} H(u) \\
            & = \biggl\{ 2 \sum\limits_{\substack{j = 1 \\ j \neq n}}\ \ \sum\limits_{\substack{i \in \mathcal{U}_{mg}, \\ |h_{i}^{mg}| < |h_{u}^{mg}| \\ i \neq n, i \neq j}}  p_{i} |h_{i}^{mg}|^{2} - \sum\limits_{\substack{i \in \mathcal{U}_{mg}, \\ |h_{i}^{mg}| > |h_{u}^{mg}|}}  \!\!\!\! p_{i} |h_{i}^{mg}|^{2}  \\
            & \quad - \sum\limits_{\substack{i \notin \mathcal{U}_{m,g}}} p_{i} |h_{i}^{mg}|^{2} \biggl\} + \sum\limits_{\substack{i \in \mathcal{U}_{mg}, \\ |h_{i}^{mg}| < |h_{u}^{mg}|}}  p_{i} |h_{i}^{mg}|^{2}\\
            & = \mathcal{T}(n \in \mathcal{U}_{m,g}) + \sum\limits_{\substack{i \in \mathcal{U}_{mg}, \\ |h_{i}^{mg}| < |h_{u}^{mg}|}}  p_{i} |h_{i}^{mg}|^{2}.
        \end{aligned}
    \end{equation}

     \vspace{-0.5em}

    Combining (\ref{eq13}), (\ref{eq14}), and (\ref{eq18}), the potential function $\Phi_{n}\left(n \in \mathcal{U}_{m,g}, \boldsymbol{\pi}^{\ast}_{-n} \right)$ can be rewritten as follows:

     \vspace{-1.5em}

    \begin{equation}\label{eq19}
        \begin{aligned}
            \Phi_{n}\left(n \in \mathcal{U}_{m,g}, \boldsymbol{\pi}^{\ast}_{-n} \right) & = \sum\limits_{\substack{i = 1 \\ i \neq n}} \ell_{\{a_{i,m} = a_{n,m}\}} W_{i} + \!\!\!\!\!\!\!\! \sum\limits_{\substack{i \in \mathcal{U}_{mg}, \\ |h_{i}^{mg}| < |h_{u}^{mg}|}}  \!\!\!\!\! p_{i} |h_{i}^{mg}|^{2}\\
            & + \mathcal{C}\left(n \in \mathcal{U}_{m,g}\right) + \mathcal{T}(n \in \mathcal{U}_{m,g}),
        \end{aligned}
    \end{equation}
    where the expression of $\mathcal{C}\left(n \in \mathcal{U}_{m,g}\right)$ and $\mathcal{T}\left(n \in \mathcal{U}_{m,g}\right)$ are independent of the offloading and grouping strategies $\left(m,g\right)$ of device $n$, and they can be regarded as constant terms. Therefore, when the offloading and grouping strategies of IoT device $n$ change from $\mathcal{U}_{m,g}$ to $\mathcal{U}_{m^{\prime}, g^{\prime}}$, the corresponding change in the value of the EPF $\Phi_{n}$ can be denoted as follows:

    \vspace{-1em}

    \begin{subequations}\label{eq20}
        \begin{align}
            & \Phi_{n}\left(n \in \mathcal{U}_{m,g}, \boldsymbol{\pi}^{\ast}_{-n} \right) - \Phi_{n}\left(n \in \mathcal{U}_{m^{\prime},g^{\prime}}, \boldsymbol{\pi}^{\ast}_{-n} \right) \nonumber \\
            & = \bigg(\sum\limits_{\substack{i = 1 \\ i \neq n}} \ell_{\{a_{i,m} = a_{n,m}\}} W_{i} + \sum\limits_{\substack{i \in \mathcal{U}_{m,g} \\ |h^{mg}_{i}| < |h^{mg}_{n}|}} p_{i} |h^{mg}_{i}|^{2} \bigg) \nonumber \\
            & - \bigg(\sum\limits_{\substack{i = 1 \\ i \neq n}} \ell_{\{a_{i, m^{\prime}} = a_{n,m^{\prime}}\}} W_{i} + \!\!\!\!\!\! \sum\limits_{\substack{i \in \mathcal{U}_{m^{\prime},g^{\prime}} \\ |h^{m^{\prime}g^{\prime}}_{i}| < |h^{m^{\prime} g^{\prime}}_{n}|}} p_{i} |h^{m^{\prime}g^{\prime}}_{i}|^{2} \bigg) \nonumber\\
            & = \sum\limits_{\substack{i = 1 \\ i \neq n}} \ell_{\{a_{i,m} = a_{n,m}\}} W_{i} + \sum\limits_{\substack{i \in \mathcal{U}_{m,g} \\ |h^{mg}_{i}| < |h^{mg}_{n}|}} p_{i} |h^{mg}_{i}|^{2} \nonumber \\
            & - \sum\limits_{\substack{i = 1 \\ i \neq n}} \ell_{\{a_{i,m^{\prime}} = a_{n,m^{\prime}}\}} W_{i} - \!\!\!\!\! \sum\limits_{\substack{i \in \mathcal{U}_{m^{\prime},g^{\prime}} \\ |h^{m^{\prime}g^{\prime}}_{i}| < |h^{m^{\prime} g^{\prime}}_{n}|}} p_{i} |h^{m^{\prime} g^{\prime}}_{i}|^{2} \nonumber\\
            & = \nu_{n}\left(n \in \mathcal{U}_{m,g}, \boldsymbol{\pi}^{\ast}_{-n} \right) - \nu_{n}\left(n \in \mathcal{U}_{m^{\prime},g^{\prime}}, \boldsymbol{\pi}^{\ast}_{-n} \right),
        \end{align}
    \end{subequations}
    which satisfies (\ref{eq12}) in the \textbf{Definition \ref{def2}}.
\end{proof}

\par Based on \textbf{Definition \ref{def2}} and (\ref{eq10}), the change in the interference function (\ref{eq10}) has the same trend with the EPF, thus we can claim the existence of the NE state according to \textbf{Theorem \ref{theo1}}, as described in \textbf{Corollary \ref{coro1}} as follows:

\begin{myCoro}\label{coro1}
    The game $\Gamma = \left(\mathcal{N}, \mathcal{M}, \mathcal{G}, \{h_{n}\}_{n \in \mathcal{N}}, \{\nu_{n}\}_{n \in \mathcal{N}}\right)$ can converge to at least one NE state.
\end{myCoro}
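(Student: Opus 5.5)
The plan is the standard finite improvement argument for exact potential games, built on Theorem \ref{theo1}. First I would note that the game $\Gamma$ is finite. Each device $n$ picks $\pi_n$ from the finite set $\mathcal{M}\times\mathcal{G}$, possibly together with the local-computation option, which adds one element. So the strategy space $\Pi=\prod_{n\in\mathcal{N}}\Pi_n$ has at most $(MG+1)^N$ elements.

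Next I would turn the per-device functions $\Phi_n$ from Theorem \ref{theo1} into a single global potential $\Phi(\boldsymbol{\pi})$. By (\ref{eq19}), $\Phi_n$ splits into a part equal to $\nu_n$ plus terms $\mathcal{C}$ and $\mathcal{T}$ that do not depend on $\pi_n$. The candidate is
\begin{equation*}
\Phi(\boldsymbol{\pi})=\tfrac12\sum_{u}\sum_{i\neq u}\ell_{\{a_{i,m}=a_{u,m}\}}W_iW_u+\sum_{u}H(u),
\end{equation*}
suitably normalized, in the spirit of (\ref{eq13}). The key property to check is that a unilateral deviation of device $n$ from $(m,g)$ to $(m',g')$ changes $\Phi$ by exactly the change in $\nu_n$, as in (\ref{eq12}) and (\ref{eq20}).

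Then I would run the finite improvement path argument. Start from any profile and let devices make strict better responses one at a time, each strictly decreasing its own $\nu_n$. Each such move strictly decreases $\Phi$ by the same amount. Since $\Pi$ is finite, $\Phi$ takes finitely many values, so no strategy profile can repeat along the path. The path therefore terminates after finitely many steps, at a profile where no device can strictly decrease $\nu_n$ unilaterally, which is exactly condition (\ref{eq11}) of Definition \ref{def1}. Equivalently, any global minimizer of $\Phi$ over the finite set $\Pi$ is an NE, which gives existence; the improvement-path view gives convergence of the best-response dynamics used by EPG-JDM.

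The main obstacle is the second step: making sure that a single function, common to all players, satisfies (\ref{eq12}). Theorem \ref{theo1} is stated with an $n$-dependent $\Phi_n$, and it contains the prefactor $1/(2W_n)$ on the computing-competition term, which does not match the unweighted sum in $\nu_n$.

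I would first try weighting the player utilities, i.e.\ checking that the pairwise symmetric interaction structure gives the exact identity for $\Phi$ above. The computing term is the sum over pairs $\{i,u\}$ co-located at a BS-MEC of $W_iW_u$. Moving $n$ changes it by $W_n\cdot\Delta(\sum_i a_{i,m}W_i)$, so the resulting game is a weighted potential game with weights $W_n>0$. The intra-group SIC term has a similar issue, since the interference $n$ receives is not symmetric with the interference it causes.

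If exactness fails, I would fall back on the ordinal potential: a strict improvement in $\nu_n$ implies a strict decrease of some global $\Phi$. That already suffices for the finite improvement property and hence for existence of and convergence to an NE.
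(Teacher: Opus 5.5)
Your route is the paper's route: a potential from Theorem~1, a finite strategy space, and then the finite improvement property. You correctly locate the weak point. The $\Phi_n$ of Theorem~1 is indexed by the deviating player, and such a family always exists trivially, since $\Phi_n:=\nu_n$ satisfies (\ref{eq12}). The paper's proof of the corollary just invokes that theorem, so it supplies no common potential either.

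Your proposal does not close the gap. The weighted-potential attempt only handles the load term. Your fallback, ``some global $\Phi$ strictly decreases along every strict improvement,'' is a generalized ordinal potential. For finite games that is equivalent to the FIP (Monderer--Shapley), which is the very claim, so the key step is assumed rather than shown.

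In fact the step cannot be shown, because the statement fails for $\nu_n$ as defined in (\ref{eq10}). The two summands there are added as raw numbers, so nothing prevents them from being comparable. Take the following instance:
\begin{itemize}
\item $M=2$ BS-MECs with $G=1$ subchannel each, so a group is a BS-MEC.
\item $N=3$ devices that must offload (e.g.\ $T_n^{\rm local}>D_n$).
\item $|h_2^{m1}|<|h_1^{m1}|<|h_3^{m1}|$ at both BSs.
\item $W_1=W_2=1$ and $W_3=2$.
\item Powers with $p_1|h_1^{m1}|^2=3$ and $p_2|h_2^{m1}|^2=2$ at both BSs.
\end{itemize}

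By (\ref{eq10}), the cost of sharing a BS-MEC is:
\begin{itemize}
\item device~1 pays $3$ with device~2 (load plus SIC) and $2$ with device~3 (load only);
\item device~2, being weakest, pays only loads: $1$ with device~1 and $2$ with device~3;
\item device~3 pays $4$ with device~1 and $3$ with device~2.
\end{itemize}

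If all three share a BS-MEC, any device gains by moving to the empty one. Each split also has a profitable move:
\begin{itemize}
\item in $\{1,2\}\,|\,\{3\}$, device~1 moves ($3\to 2$);
\item in $\{1,3\}\,|\,\{2\}$, device~3 moves ($4\to 3$);
\item in $\{2,3\}\,|\,\{1\}$, device~2 moves ($2\to 1$).
\end{itemize}
So this instance of $\Gamma$ has no pure NE. The best-response dynamics of Algorithm~\ref{algo1} cycle, and no exact, weighted, or ordinal potential exists.

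The cause is that the SIC term is one-sided: only the stronger member of a pair is hurt. The load term is also asymmetric whenever task sizes differ. Together they let the pairwise preferences form a cycle. A correct result therefore needs a modified model or claim. One option is pairwise interactions that become symmetric after a fixed reweighting of each player's cost, which is exactly what a weighted potential requires.
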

\begin{proof}
    As defined in \textbf{Definition \ref{def1}}, a NE denotes is a state, where each IoT device selects its optimal task offloading and user grouping strategy given the strategies of others, such that no active device can reduce its experienced inter-group interference by unilaterally changing its strategy. As defined in \textbf{Definition \ref{def2}}, every such game has at least one NE, implying that the game is destined to reach a NE within a finite number of steps. In our considered game $\Gamma = \left(\mathcal{N}, \mathcal{M}, \mathcal{G}, \{h_{n}\}_{n \in \mathcal{N}}, \{\nu_{n}\}_{n \in \mathcal{N}}\right)$, the set of device $\mathcal{N}$ is finite, and the strategy evolution towards minimizing the value of interference function. Once the desired objective is achieved, strategies remain unchanged and are repeatedly selected, indicating that a NE has been reached.
\end{proof}

\vspace{-0.8em}

\par As mentioned above, in the potential game model, each player adjusts their strategy towards reducing their own interference. This behavior pattern follows the Finite Improvement Property (FIP) that is defined as follows:

\vspace{-0.3em}

\begin{myDef}\label{def3}
    \textbf{Finite Improvement Property (FIP)} In potential games, due to the fixed number of strategy spaces, the increasing path length of a game is limited. Therefore, after a limited number of strategy iterations and adjustments, players in a game reach an NE state, enabling the system to achieve optimal performance. This process is known as FIP, which gradually improves system performance through a finite number of iterations and ultimately reaches NE, ensuring the convergence of the system in games involving absolutely rational and selfish players.
\end{myDef}

\vspace{-0.4em}

\par According to \textbf{Definition \ref{def3}}, \textbf{Corollary \ref{coro2}} is derived as follows:

\begin{myCoro}\label{coro2}
    $\Gamma = \left(\mathcal{N}, \mathcal{M}, \mathcal{G}, \{h_{n}\}_{n \in \mathcal{N}}, \{\nu_{n}\}_{n \in \mathcal{N}}\right)$ meets the conditions of the FIP.
\end{myCoro}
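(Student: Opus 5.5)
The plan is the standard argument that every finite exact potential game has the FIP, specialized to $\Gamma$. First I will fix what an improvement path is: a sequence of strategy profiles $\boldsymbol{\pi}^{(0)}, \boldsymbol{\pi}^{(1)}, \cdots$ in which consecutive profiles differ only in the strategy of a single device $n_t$. That device strictly lowers its own interference, i.e. $\nu_{n_t}(\boldsymbol{\pi}^{(t+1)}) < \nu_{n_t}(\boldsymbol{\pi}^{(t)})$.

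To show every such path is finite, I will use Theorem \ref{theo1} in the form of the exact-potential identity (\ref{eq12}). When device $n_t$ moves from $(m,g)$ to $(m',g')$ with $\boldsymbol{\pi}_{-n_t}$ fixed, the change in $\Phi$ equals the change in $\nu_{n_t}$, which is strictly negative.

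The one delicate point is that Theorem \ref{theo1} writes the potential as $\Phi_n$, indexed by the deviating player. For a path argument I need a single function of the whole profile. I would take $\Phi(\boldsymbol{\pi})$ to be the expression in (\ref{eq13}) with the $n$-dependent pieces symmetrized: the pairwise load term $\frac{1}{2}\sum_{u}\sum_{i\neq u}\ell_{\{a_{i,m}=a_{u,m}\}}W_iW_u$, normalized so that its difference under a move of $n$ matches $\sum_{i\neq n}\ell W_i$, plus the pairwise SIC interference term $\sum_u H(u)$. I would then argue, as in the proof of Theorem \ref{theo1}, that the remaining terms $\mathcal{C}$ and $\mathcal{T}$ do not depend on the deviator's choice. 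If this symmetrization turns out awkward, I would fall back on the ordinal version. It only needs $\Phi$ to strictly decrease along each improvement step, and that follows directly from (\ref{eq20}) for the deviating player at each step.

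Next comes finiteness of the strategy space. Each device chooses among at most $MG+1$ options (the offloading/subchannel pairs, plus local computing). The profile space therefore has at most $(MG+1)^N$ elements. Hence $\Phi$ takes finitely many values, and a sequence along which $\Phi$ strictly decreases cannot revisit a profile. Any improvement path thus has length at most $(MG+1)^N-1$, which is exactly the FIP in Definition \ref{def3}.

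Finally, the terminal profile of a maximal improvement path admits no profitable unilateral deviation. It therefore satisfies (\ref{eq11}), which ties the statement back to Corollary \ref{coro1}.

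The main obstacle is the step that produces a player-independent global potential from the per-player $\Phi_n$ of Theorem \ref{theo1}. I would handle it by directly verifying (\ref{eq12}) for the symmetrized $\Phi$, splitting into the computation-load part and the intra-group interference part. Strictly, the interference part is only exactly symmetric when SIC order effects are accounted for. If exactness fails there, I would settle for strict monotonicity along improvement steps, which suffices for the FIP.
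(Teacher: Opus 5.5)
Your template is the right one: a single function on the finite profile space that strictly decreases along every improvement step, so that no profile is revisited. It is also what the paper's proof relies on when it asserts that improving players ``do not repeatedly choose the same strategies''. That assertion is made without justification. The step you flag as the main obstacle is where the argument breaks, and neither of your remedies closes it.

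\begin{itemize}
\item \textbf{The load term.} Under a move of $n$, $\frac{1}{2}\sum_u\sum_{i\neq u}\ell\,W_iW_u$ changes by $W_n$ times the change in the load part of $\nu_n$. The normalization you want is therefore division by $W_n$, which makes the function depend on the deviator again. It is only a weighted potential.
\item \textbf{The interference term.} $\sum_u H(u)$ changes by the change in $n$'s own SIC interference, plus $p_n|h_n^{m'g'}|^2$ times the number of stronger users in the new group, minus $p_n|h_n^{mg}|^2$ times the number in the old group. The extra terms arise because SIC interference is one-directional. They have no sign, so you do not even get strict monotonicity.
\item \textbf{The ordinal fallback.} Saying ``$\Phi_{n_t}$ decreases at step $t$ by (\ref{eq20})'' is vacuous. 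The function changes with the deviator, and a per-player function with this property exists in every game (take $\Phi_n=\nu_n$). It therefore cannot rule out cycles.
\end{itemize}

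In fact no repair is possible without extra assumptions, because $\Gamma$ need not have the FIP, or even a pure NE.

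\textbf{Counterexample.} Take $M=2$ and $G=1$, with three devices that must offload. Use the same SIC order $|h_3^{m}|<|h_2^{m}|<|h_1^{m}|$ at both BSs. Set $W_1=2$, $W_2=1$, $W_3=0.5$, and $p_2|h_2^{m}|^2=p_3|h_3^{m}|^2=2$ for both $m$. By (\ref{eq10}), the costs are:
\begin{itemize}
\item device 1 pays $3$ when sharing a BS with device 2, and $2.5$ with device 3;
\item device 2 pays $2$ with device 1, and $2.5$ with device 3;
\item device 3 pays $2$ with device 1, and $1$ with device 2;
\item a device alone pays $0$.
\end{itemize}
These preferences are cyclic, which yields the improvement cycle
$(\{1,2\}\,|\,\{3\})\to(\{2\}\,|\,\{1,3\})\to(\{2,3\}\,|\,\{1\})\to(\{3\}\,|\,\{1,2\})\to(\{1,3\}\,|\,\{2\})\to(\{1\}\,|\,\{2,3\})\to(\{1,2\}\,|\,\{3\})$.
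This cycle passes through all six two-one splits. Any profile with all three devices on one BS is also unstable, since moving to the empty BS costs $0$. So there is no pure NE at all, which also contradicts Corollary \ref{coro1}.

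The statement therefore needs additional structure. Without the one-directional SIC term, your unnormalized sum of $W_iW_j$ over pairs of devices on the same BS is a weighted potential with weights $W_n$. It strictly decreases along improvement steps, and your finiteness argument then goes through verbatim.
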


\vspace{-1em}

\begin{proof}
    In the considered $\! \Gamma \!=\! \left(\mathcal{N}, \mathcal{M}, \mathcal{G}, \{h_{n}\}_{n \in \mathcal{N}}, \{\nu_{n}\}_{n \in \mathcal{N}}\right)$, due to the finite number of IoT devices, BSs, and subchannels, the corresponding sets of devices $\mathcal{N}$, BSs $\mathcal{M}$, and subchannels $\mathcal{G}$ are also limited. This means that the number of optional strategy spaces for offloading and grouping in $\Gamma$ is fixed. In addition, when players in $\Gamma$ adjust their strategies, their strategies evolve towards reducing interference and improving individual interests, ensuring that they do not repeatedly choose the same offloading and grouping strategies $(m,g)$. Therefore, $\Gamma$ will stop at an optimal strategy point, reaching a NE state after a limited number of iterative adjustments.
\end{proof}

\vspace{-0.5em}

\par In order to guarantee that the game results can achieve optimal performance, it is also necessary to ensure that the game $\Gamma$ is Pareto optimal, which can be defined as follows:

\begin{myDef}\label{def4}
    \textbf{Pareto Optimality (PO)} describes an ideal state of offloading and grouping strategies in which no players' situation can be worsened without improving the utility of at least one other player, meaning that the system can benefit one or more players without harming any others. A state is considered Pareto optimal when it is impossible to make any further improvement to the utility of at least one player without negatively impacting the others. In other words, a Pareto optimal state is one where no changes in offloading and grouping strategies can increase the performance of any player without causing harm to others, indicating that the system has reached a global optimal solution. Pareto improvement refers to the process and method of moving from an initial state to a Pareto optimal state.
\end{myDef}

\vspace{-0.5em}

\begin{algorithm}[h]
\small
 \setstretch{1.0}
        \caption{The Potential Game-based Optimal Joint Offloading and Grouping Algorithm.}
        \label{algo1}
        \KwIn{IoT device set $\mathcal{N}$, BS set $\mathcal{M}$, subchannel set $\mathcal{G}$, channel coefficient $h^{m g}_{n}$, delay constraints $D_{n}$;}
        \KwOut{Optimal task offloading and grouping strategies $\boldsymbol{\pi}^{\ast}$.}
        \textbf{Initialize:} $\mathcal{U}_{m,g} = \varnothing, \forall m \in \mathcal{M}, g \in \mathcal{G}$;\\
        Randomly initialize the offloading and grouping strategy $\pi_{n}$ for each device $n \in \mathcal{N}$;\\
        \Repeat{$\boldsymbol{\pi}^{\ast} = \{\pi_{n}^{\ast}\}_{n \in \mathcal{N}} = (m_{n}^{\ast}, g_{n}^{\ast})_{n \in  \mathcal{N}}$ remains unchanged}{
         \For{$t = 1$ to $T$}{
           \For{$n = 1$ to $N$}{
              Calculate $T_{n}^{\rm{local}}$ based on Eq. (\ref{eq3});\\
             \eIf{$T_{n}^{\rm{local}} \leq D_{n}$}
             {
                 \For{$j = 1$ to $M$}{
                   $a_{n,j} = 0$;\\
                 }
             }
             {
                 \For{$m = 1$ to $M$}{
                   Calculate $\Phi_{n}(n \in \mathcal{U}_{m,g}, \boldsymbol{\pi}_{-n})$ based on Eq. (\ref{eq13});\\
                 }
                 Find $m_{n}^{\ast} = \arg \min_{m \in \mathcal{M}} \Phi_{n}(n \in \mathcal{U}_{m,g}, \boldsymbol{\pi}_{-n})$;\\
                 Update the offloading strategy of device $n$ to $m_{n}^{\ast}$;\\
             }
           }
         }

         \For{$t = 1$ to $T$}{
           \For{$n = 1$ to $N$}{
                 \For{$g = 1$ to $G$}{
                    Calculate $\Phi_{n}(n \in \mathcal{U}_{m_{n}^{\ast},g}, \boldsymbol{\pi}_{-n})$ based on Eq. (\ref{eq13});\\
                 }
                 Find $g_{n}^{\ast} = \arg \min_{g \in \mathcal{G}} \Phi_{n}(n \in \mathcal{U}_{m_{n}^{\ast},g}, \boldsymbol{\pi}_{-n})$;\\
                 Update the grouping strategy of device $n$ to $g_{n}^{\ast}$;\\
             }
           }
    }
    \textbf{return} $\boldsymbol{\pi}^{\ast}$ \rm{and} $T_{n}, \forall n \in \mathcal{N}$.
\end{algorithm}

\par In the game $\Gamma$, any unilateral change in offloading and grouping strategies $(m,g)$ may lead to a decrease in the potential function for certain players, thus reducing their interference. However, it could also cause an increase in the potential function value of other players, thereby increasing their interference. According to \textbf{Definition \ref{def4}}, when the potential game $\Gamma$ reaches a NE state, the system can be considered Pareto optimal. Given that potential games can ensure the achievement of NE, we propose the potential game-based optimal joint offloading and grouping algorithm, as detailed in \textbf{Algorithm \ref{algo1}}, which aims to realize the optimal task offloading and user grouping strategies. \textbf{Algorithm \ref{algo1}} utilizes the properties of potential games to update players' offloading and grouping decisions iteratively, ultimately achieving a NE in the system. Thus, based on ensuring the interests of each player, the optimal offloading and grouping strategies of the entire system can be achieved. At the beginning of \textbf{Algorithm \ref{algo1}}, it first clears each group and initializes the offloading and grouping strategies for each device (\textbf{Line 1-2}). Before the game begins, the system first evaluates whether the device's computing capacity meets the delay requirements $D_{n}$. If the delay limit of $D_{n}$ is met, the device chooses local computing, which helps reduce device interference caused by offloading a large number of tasks to the same BS-MEC. If the computation capacity of the device itself is insufficient to meet the delay requirements, it chooses to offload to BS-MEC and then enters the game process corresponding to \textbf{Line 11-22}, which aims to determine the optimal offloading and grouping strategies for each device.

\subsubsection{\textbf{Optimality and Computational Complexity Analysis}} The proposed \textbf{Algorithm \ref{algo1}} is based on the framework of dynamic potential games. In each iteration, given the strategies $\boldsymbol{\pi}_{-n}$ of the other device $n \in \mathcal{N}$, autonomously selects its offloading and grouping strategy to minimize its interference, which is equivalently represented by the potential function. Since the formulated game is an EPG, any unilateral deviation by player $n$ yields an identical change in both its interference function and the global potential function. Formally, for any unilateral deviation we have (\ref{eq12}). Consequently, whenever a player strictly improves its interference function, the potential function also strictly decreases, as follows:

\vspace{-1em}

\begin{equation}\label{eq21}
    \begin{aligned}
        & \quad \nu_{n}\left(n \in \mathcal{U}_{m^{\prime},g^{\prime}}, \boldsymbol{\pi}^{\ast}_{-n} \right) < \nu_{n}\left(n \in \mathcal{U}_{m,g}, \boldsymbol{\pi}^{\ast}_{-n} \right)\\
        & \Leftrightarrow \Phi_{n}\left(n \in \mathcal{U}_{m^{\prime},g^{\prime}}, \boldsymbol{\pi}^{\ast}_{-n} \right) < \Phi_{n}\left(n \in \mathcal{U}_{m,g}, \boldsymbol{\pi}^{\ast}_{-n} \right).  \\
    \end{aligned}
\end{equation}

\par Since each player’s strategy set is finite, the global strategy space $\Pi$ is also finite. Therefore, the potential function can only take values from a finite set $\left\{\Phi(\boldsymbol{\pi}): \boldsymbol{\pi} \in \Pi\right\}$. As a result, an infinite strictly decreasing sequence of potential values cannot exist, which is known as the FIP described in \textbf{Definition \ref{def3}}, guaranteeing that the iterative process terminates in a finite number of steps. Once no player can further improve its interference function through unilateral deviation, the system reaches a NE. By \textbf{Definition \ref{def4}}, such an NE also satisfies the criterion of Pareto optimality, since no further Pareto improvements can be achieved at that point.

\par The main computational cost of \textbf{Algorithm \ref{algo1}} lies in \textbf{Steps 11–21}, where each device $n$ evaluates and updates its offloading and grouping strategies. The complexity of this evaluation is $\mathcal{O}\left(M+G\right)$. With $N$ devices in the system, and assuming that on average $C$ iterations are required for \textbf{Algorithm \ref{algo1}} to converge, then the overall computational complexity can be given by $\mathcal{O}\left(NC(M+G)\right)$. In the worst-case scenario, the maximum number of iterations $C$ is upper-bounded by the size of the joint strategy space, i.e., $C \leq |\Pi| - 1 = \Pi_{n=1}^{N} |S_{n}| - 1$, where $S_{n}$ denotes the strategy set of player $n$. However, in practice, the algorithm typically converges with $C$ much smaller than this bound.

\subsection{Convex Optimization-Based Power Allocation Scheme}

\vspace{-0.2em}

\par Given the task offloading and user grouping strategies $\boldsymbol{\pi}^{\ast}$ obtained by \textbf{Algorithm \ref{algo1}}, $\mathcal{P}1$ can be degenerated into the power allocation subproblem, as described in $\mathcal{P}2$.

\vspace{-1em}

\begin{subequations}\label{eq22}
\begin{align}
\mathcal{P}2:           & \min_{\{p_{n}\}_{\forall n \in \mathcal{N}}} \sum\limits_{n = 1}^{N} T_{n}, \\
\emph{s.t.} \quad \quad & 0 \leq p_{n} \leq p_{max}, \quad \forall n \in \mathcal{N}, \\
                        & 0 \leq T_{n} \leq D_{n}, \quad \forall n \in \mathcal{N},   \\
                        & E_{n} \leq E_{\mathrm{th}}, \quad \forall n \in \mathcal{N}.
\end{align}
\end{subequations}

\par $\mathcal{P}2$ is a non-convex function with respect to the power allocation scheme $\boldsymbol{p} = \{p_{1}, \cdots, p_{N}\}$, as it exists not only in the numerator but also in the denominator during computation. In order to overcome the solving challenges caused by non convexity, we transform the original minimization total delay into maximization of transmission rate, which can find the suboptimal solutions of $\mathcal{P}2$. This method utilizes the inverse relationship between maximizing transmission rate and minimizing delay, that is, indirectly reducing the total delay of computation and transmission by increasing transmission rate. After transforming the optimization objective, $\mathcal{P}2$ can be rewritten as follows:

\vspace{-1em}

\begin{subequations}\label{eq23}
\begin{align}
\mathcal{P}3:           & \max_{\{p_{n}\}_{\forall n \in \mathcal{N}}} \sum\limits_{n = 1}^{N} R_{n}^{mg}, \\
\emph{s.t.} \quad \quad & (\ref{eq22}b)-(\ref{eq22}d),
\end{align}
\end{subequations}
where the expression of $R_{n}^{mg}$ can be formulated as follows:

\vspace{-0.5em}

\begin{equation}\label{eq24}
   \begin{aligned}
     R_{n}^{mg} & = B \log_{2}\bigg(\sigma^{2} + \sum\limits_{\substack{i \in \mathcal{U}_{m,g} \\ |h^{mg}_{i}| \leq |h^{mg}_{n}|}} p_{i} |h^{mg}_{i}|^{2}\bigg) \\
     & - B \log_{2}\bigg(\sigma^{2} + \sum\limits_{\substack{i \in \mathcal{U}_{m,g} \\ |h^{mg}_{i}| < |h^{mg}_{n}|}} p_{i} |h^{mg}_{i}|^{2}\bigg).
    \end{aligned}
\end{equation}

\par $\mathcal{P}3$ can be effectively tackled using Majorization-Minimization (MM) algorithm \cite{10430407,9333915}, which can simplify the solution of non-convex problems in various ways, such as the Taylor expansion method, convex inequality method, etc. In this paper, we leverage the Taylor expansion method to optimize $\mathcal{P}3$, which can effectively transform non-convex functions into convex functions. The Taylor expansion method transforms the original non-convex problem into a convex problem by expanding it using the Taylor series. For example, given a first-order differentiable function $f$, its Taylor expansion at point $x_{0}$ can be formulated as

\begin{equation}\label{eq25}
     f(x) = f(x_{0}) + \nabla f(x_{0})\left(x - x_{0}\right).
\end{equation}

\par Based on the principle of Taylor expansion, the expression of $R^{mg}_{n}$ can be expanded as follows:

\begin{subequations}\label{eq26}
     \begin{align}
         \widetilde{R}_{n}^{mg} & = B \log_{2}\bigg(\sigma^{2} + \sum\limits_{\substack{i \in \mathcal{U}_{m,g} \\ |h^{mg}_{i}| \leq |h^{mg}_{n}|}} p_{i}^{(t)} |h^{mg}_{i}|^{2}\bigg) \nonumber \\
                                 & - B \log_{2}\bigg(\sigma^{2} + \sum\limits_{\substack{i \in \mathcal{U}_{m,g} \\ |h^{mg}_{i}| < |h^{mg}_{n}|}} p_{i}^{(t-1)} |h^{mg}_{i}|^{2}\bigg) \nonumber \\
                                 & - \frac{B}{\ln 2} \frac{\left(\sum\limits_{\substack{i \in \mathcal{U}_{m,g} \\ |h^{mg}_{i}| < |h^{mg}_{n}|}} \left(p_{i}^{(t)} - p_{i}^{(t-1)}\right) |h^{mg}_{i}|^{2}\right)}{\left(\sigma^{2} + \sum\limits_{\substack{i \in \mathcal{U}_{m,g} \\ |h^{mg}_{i}| < |h^{mg}_{n}|}} p_{i}^{(t-1)}|h^{mg}_{i}|^{2}\right)}.
     \end{align}
\end{subequations}

\par After Taylor expansion, $\widetilde{R}^{mg}_{n}$ is already a convex function, thus $\mathcal{P}3$ can be transformed into a convex problem, as described in $\mathcal{P}4$, as follows:

\begin{subequations}\label{eq27}
\begin{align}
\mathcal{P}4:           & \max_{\{p_{n}\}_{\forall n \in \mathcal{N}}} \sum\limits_{n = 1}^{N} \widetilde{R}^{mg}_{n}, \\
\emph{s.t.} \quad \quad & (\ref{eq22}b)-(\ref{eq22}d).
\end{align}
\end{subequations}

\vspace{-1em}

\begin{algorithm}[htbp]
 \setstretch{1}
        \caption{Majorization-Minimization (MM) Based Power Allocation Algorithm.}
        \label{algo2}
        \KwIn{Device set $\mathcal{N}$; BS-MEC set $\mathcal{M}$; Subchannel set $\mathcal{G}$; Channel coefficient $h^{m g}_{n}$; Delay constraints $D_{n}$; Task offloading and grouping strategies $\boldsymbol{\pi}$; Convergence threshold $\varepsilon$; Maximum iteration numbers $T_{max}$;}
        \KwOut{Optimal power allocation $\boldsymbol{p}^{\ast} = \{p_{1}^{\ast}, \cdots, p_{N}^{\ast}\}$.}
        Under the premise of ensuring constraints (\ref{eq22}b), (\ref{eq22}c) and (\ref{eq22}d), initialize the power allocation scheme of devices;\\
        Initialize the number of iteration $t = 1$;\\
        \While{$\left|\widetilde{R}^{mg^{(t)}}_{n} - \widetilde{R}^{mg^{(t-1)}}_{n}\right| > \varepsilon$   \text{\rm{and}} $t < T_{max}$}{
           Using CVX to solve problem $\mathcal{P}4$ and obtaining the power allocation scheme $\boldsymbol{p}^{(t)}$;\\
           t = t + 1;\\
        }
\end{algorithm}

\vspace{-1em}

\par After the transformation of Taylor expansion, the original power allocation problem $\mathcal{P}2$ is effectively transformed into a convex optimization problem $\mathcal{P}4$, which can be solved using the existing convex optimization toolkit, like CVX. Moreover, considering that the objective function $\widetilde{R}^{mg}_{n}$ is obtained through an iterative process, this paper needs to use an iterative algorithm to gradually approach the optimal power allocation scheme, in order to ensure that the algorithm can converge to the optimal solution while meeting certain accuracy requirements. Specifically, the detailed steps of the power allocation algorithm are detailed in \textbf{Algorithm \ref{algo2}}.

\vspace{-1em}

\subsection{AO-Based Joint Task Offloading, User Grouping, and Power Allocation Scheme}

\par In this subsection, we propose the alternating optimization (AO)-based joint task offloading, user grouping, and power allocation algorithm to effectively address $\mathcal{P}1$. Firstly, we fix the power allocation scheme $\boldsymbol{p} = \{p_{n}\}_{n \in \mathcal{N}}$, and then invoke \textbf{Algorithm \ref{algo1}} to derive the optimal task offloading and user grouping strategies, i.e., $\boldsymbol{a} = \{a_{m,n}\}_{m \in \mathcal{M}, n \in \mathcal{N}}$ and $\boldsymbol{b} = \{b_{g,n}\}_{g \in \mathcal{G}, n \in \mathcal{N}}$, respectively. Secondly, we use the derived $\left [\boldsymbol{a}, \boldsymbol{b}\right ]$ as the inputs for Algorithm \ref{algo2}, and then execute it to update the power allocation scheme $\boldsymbol{p}$. Loop iteratively optimize the above process until $\left[\boldsymbol{a}, \boldsymbol{b}, \boldsymbol{p}\right]$ converges to stable solutions or the termination conditions are satisfied. The detailed process mentioned above can be described by \textbf{Algorithm \ref{algo3}}.

\begin{algorithm}[h]
\small
 \setstretch{1.0}
        \caption{AO-Based Joint Task Offloading, User Grouping, and Power Allocation Scheme.}
        \label{algo3}
        \KwIn{IoT device set $\mathcal{N}$, BS set $\mathcal{M}$, subchannel set $\mathcal{G}$, channel coefficient $h^{mg}_{n}$, delay constraints $D_{n}$, convergence threshold $\varepsilon$, maximum iteration numbers $T_{max}$;}
        \KwOut{Optimal task offloading and grouping strategies $\boldsymbol{\pi}^{\ast} = [\boldsymbol{a}^{\ast}, \boldsymbol{b}^{\ast}]$, and optimal power allocation scheme $\boldsymbol{p}^{\ast}$}
        \For{$n = 1$ to $N$}{
          Each device $n$ is connected to the nearest BS;\\
          Each device $n$ randomly selects subchannels;\\
          Initializes power allocation $p_{n}$;\\
        }
        \Repeat{All devices do not change their offloading and grouping strategies or reach the maximum number of iterations $T_{max}$}{
           Execute Algorithm \ref{algo1} to derive the offloading and grouping strategies $\boldsymbol{\pi}^{\ast} = [\boldsymbol{a}^{\ast}, \boldsymbol{b}^{\ast}]$;\\
           Execute Algorithm \ref{algo2} to obtain the power allocation scheme $\boldsymbol{p}^{\ast}$;\\
        }
    \textbf{return} $\left[\boldsymbol{a}^{\ast}, \boldsymbol{b}^{\ast}, \boldsymbol{p}^{\ast}\right]$.
\end{algorithm}

\vspace{-1em}

\section{Performance Evaluation}

\par In this section, we present extensive simulation experiments and discussions to validate the effectiveness of our proposed joint task offloading, user grouping, and power allocation scheme. We compare our algorithms against state-of-the-art methods and classic heuristic approaches, and provide detailed comparative analyses. Moreover, comprehensive ablation studies are conducted to underscore the key contributions of this work and confirm the overall superiority of our schemes.

\vspace{-0.3em}

\begin{table}[h]
 \renewcommand{\arraystretch}{1.2}
 \caption{\small Experimental Parameter Settings}
 \vspace{-0.5em}
 \label{tab1}
 \centering
 \resizebox{0.99\columnwidth}{!}
 {
 \begin{tabular}{|l|c|}
  \hline
  \bfseries Parameter                                               & \bfseries Value\\
  \hline
  Number of BS-MEC $M$                                              & $4$     \\
  \hline
  Number of IoT device $N$                                          & $80$    \\
  \hline
  Number of subchannels $G$                                         & $5$     \\
  \hline
  Bandwidth of each subchannel $B$                                  & $410$ kHz \\
  \hline
  Large-scale fading                                                & $128.1 + 37.6 \log(d_{m,n} [\rm{km}])$\\
  \hline
  Noise power spectral density $N_{0}$                              & $-174 \rm{dBm/Hz}$   \\
  \hline
  Size of tasks $W_{n}$                                             & $[5, 15]$ Mbits\\
  \hline
  Delay Constraints $D_{n}$                                         & $[0.2, 1.2]$ s\\
  \hline
  Computing capacity of IoT devices $f_{n}^{\rm{local}}$            & $[3, 8]$ Mbits/s\\
  \hline
  Computing capacity of MEC servers $f_{m}$                         & $[7, 10]$ Gbits/s\\
  \hline
  Maximum transmit power of devices $p_{max}$                       & $27.8$ dBm \\
  \hline
  Energy constraint $E_{\rm{th}}$                                   & $[0.152,0.910]$ J\\
  \hline
 \end{tabular}
 }
\end{table}

\vspace{-1.3em}

\subsection{Experimental Parameter Settings}

\par In our considered simulation scenario, IoT devices are randomly distributed within a rectangular area of $1000\ \rm{m} \times 1000\ \rm{m}$, with four BSs located at coordinate points $(250\ \rm{m} , 750\ \rm{m})$, $(250\ \rm{m} , 250\ \rm{m})$, $(750\ \rm{m} , 750\ \rm{m})$, and $(750\ \rm{m} , 250\ \rm{m})$, respectively. We consider that the small-scale fading follows a Rayleigh distribution with the mean of $0$ and variance of $1$, and follows an independent and identically distributed (i.i.d.) Gaussian distribution. The noise power can be calculated according to $\sigma^{2} = B N_{0}$. The specific settings of other simulation parameters are shown in Table \ref{tab1}.

\vspace{-0.3em}

\subsection{Baseline Schemes for Comparison}

\par To comprehensively evaluate the effectiveness of the proposed joint task offloading, user grouping, and power allocation scheme (denoted \textbf{Proposed}), we compare it against several state-of-the-art methods and classic heuristic approaches. In particular, for the user grouping in NOMA-assisted systems, we select two of the most popular grouping strategies, as follows:

\begin{itemize}
    \item Gale-Shapley Grouping (\textbf{Gale-Shapley}) \cite{7972929,9887894,9085263}: In this strategy, devices and groups are regarded as two sets with their own preference lists, with each device tending to access the groups with better channel conditions, and each group also tending to select the devices with higher channel gains.

    \item Max-Min based Grouping (\textbf{Max-Min}) \cite{10093902,8365852}: Each IoT device is grouped into the group with the greatest difference in channel conditions, so that the NOMA can decode the messages of IoT devices by leveraging the differences of channel conditions among IoT devices.
\end{itemize}

\par For the comparison of task offloading, we select two of the popular task offloading strategies, as follows:

\begin{itemize}
    \item Nearby BS-based Offloading (\textbf{Nearby}) \cite{10309191}:  Each IoT device will prioritize offloading its tasks to the nearest BS and select the sub-channel with the least interference for access.

    \item Computing Capacity-based Offloading (\textbf{Computing}): In this strategy, each IoT device offloads its task to the BS-MEC with the strongest computing capacity that currently allows access devices (i.e., those that have not yet reached the maximum device access limit).
\end{itemize}

\vspace{-0.5em}

\begin{figure}[htbp]
\centering
\includegraphics[scale=0.42]{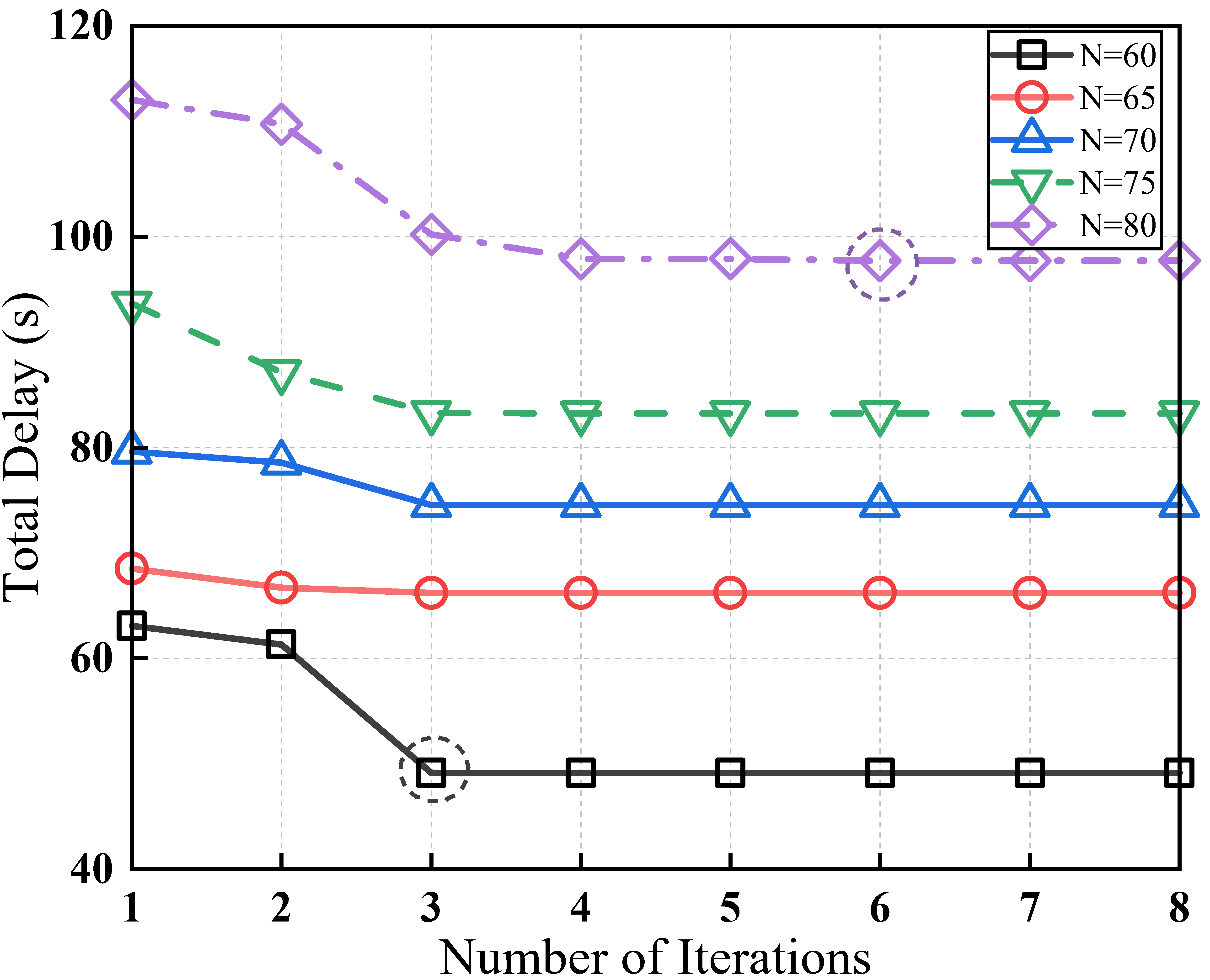}
\caption{\small The convergence performance of the proposed joint task offloading, user grouping, and power allocation scheme under different numbers of IoT devices.}
\label{fig_convergence}
\end{figure}

\subsection{Analysis of Experimental Results}

\subsubsection{Convergence Verification}

\par As shown in Fig. \ref{fig_convergence}, we demonstrate the convergence performance of the proposed joint task offloading, user grouping, and power allocation scheme under different numbers of IoT devices. As the number of IoT devices $N$ increases, the number of iterations required for the algorithm to converge also increases. In particular, when $N=60$, the total delay of the system can converge to a stable value after $3$ iterations. However, when $N=80$, the number of iterations required for the algorithm to converge increases to $6$. This is mainly because, as the number of devices increases, the decision space available for each device to select from when making offloading and grouping strategies expands, which may require more iterations to achieve convergence. However, the average number of iterations of the proposed algorithm always remains below $6$ times when no more than $80$ IoT devices are connected, which fully demonstrates that our algorithm can quickly converge to the NE state in a short number of iterations.

\begin{figure}[htbp]
\centering
\includegraphics[scale=0.68]{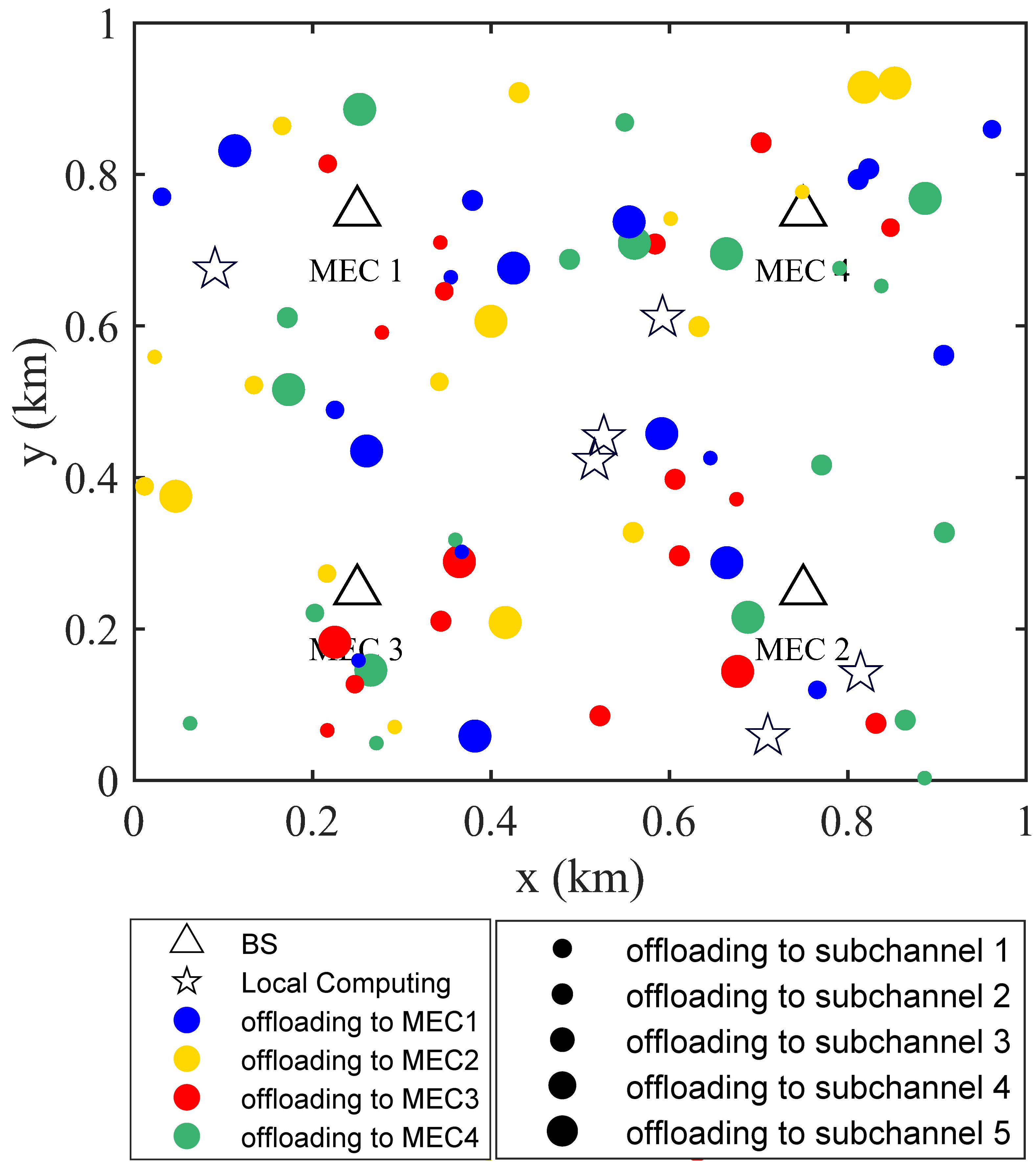}
\caption{\small The distribution of IoT device locations after convergence of the proposed joint task offloading, user grouping, and power allocation scheme.}
\label{fig_Convergence_Locations}
\end{figure}

\par As shown in Fig. \ref{fig_Convergence_Locations}, we present the visualization results of the device distribution and the task offloading and user grouping strategies of the algorithm after convergence. The observation results show that in the final offloading decision of each device, the device does not always select to offload its tasks to the nearest BS-MEC. This phenomenon explains that, after considering multiple factors, such as channel conditions, computing requirements, and BS-MEC computing capabilities, an IoT device may make offloading decisions that favor a non-nearest BS in order to optimize overall system performance, rather than solely determining its own grouping strategy based on the quality of channel conditions.

\subsection{Performance Comparison}

\par As illustrated in Fig.~\ref{fig_total_delay_average_delay_vs_IoT_device_number}, we present the total delay and average delay performance of different algorithms under various IoT device access scenarios. Under a fixed number of BS-MECs $M = 4$ and subchannels $G = 5$, it can be observed that both the total delay and average delay per device increase with the value of $N$. This is mainly because that as the number of IoT devices increases, the number of devices offloaded to the same BS-MEC also increases accordingly, leading to intensified intra-group interference among devices and consequently reducing the uplink transmission throughput. On the other hand, since the computing capacity of the BS-MEC server is also limited, the increase in the number of IoT devices leads to more intense competition for computing resources among devices, which in turn results in higher computational delays.

\par In addition, Fig.~\ref{fig_total_delay_average_delay_vs_IoT_device_number} (a) and (b) reveal that the proposed algorithm outperforms all other comparison algorithms, including Gale-Shapley, Max-Min, Nearby, and Computing, in terms of both total delay and average delay performance. In particular, our proposed schemes in this paper reduce the delay by at least 19.3\% compared to these comparison algorithms, and up to 41.5\%. These experimental results demonstrate that the proposed algorithm has significant advantages in achieving low latency when dealing with the IoT device access with DSCI tasks.

\begin{figure*}[t]
    \centering
    \begin{minipage}[b]{0.47\textwidth}
        \centering
        \begin{subfigure}{\linewidth}
            \centering
            \includegraphics[width=\linewidth]{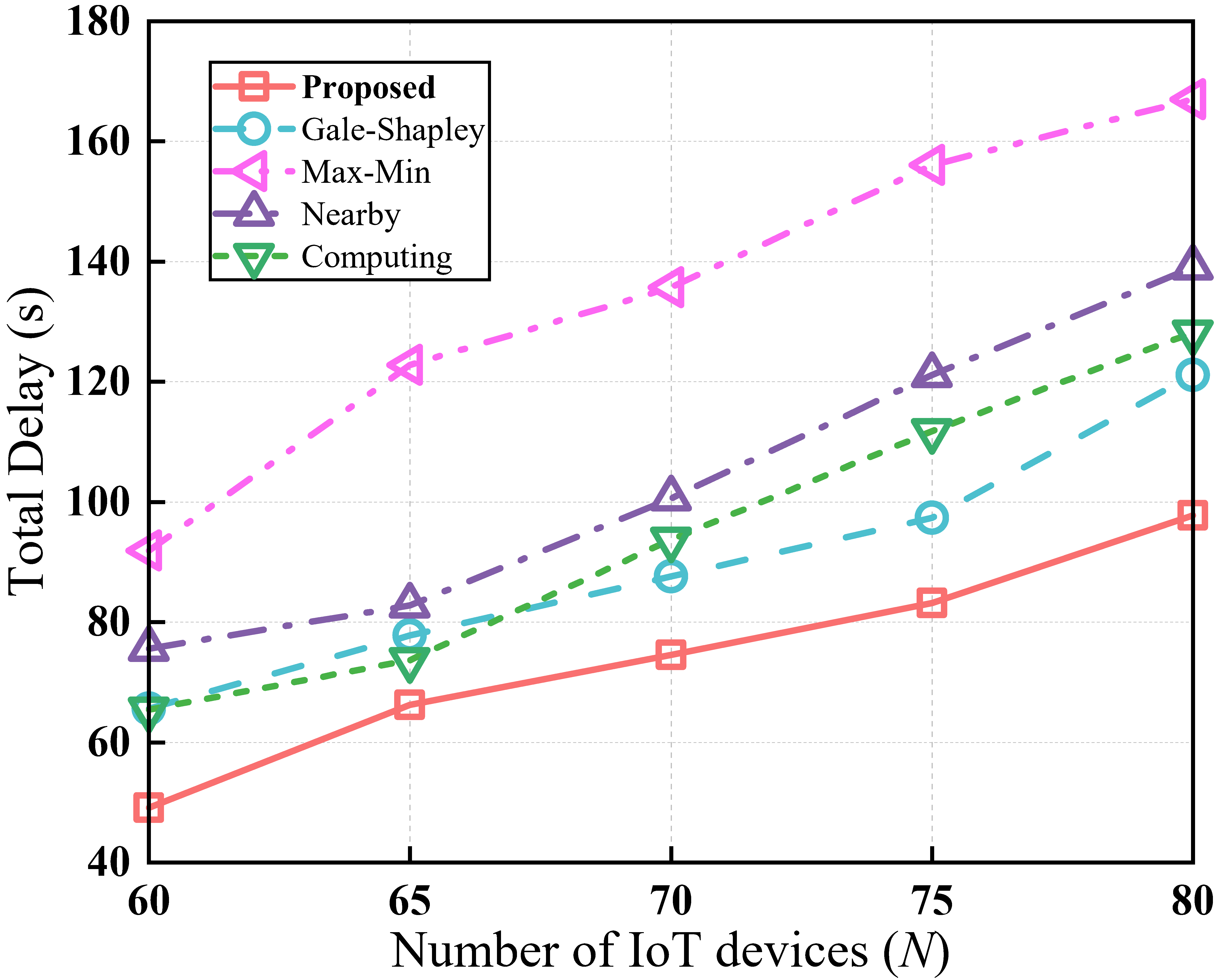}
            \caption{Total delay versus number of IoT devices.}
        \end{subfigure}
    \end{minipage}
    \hspace{0.005\textwidth}
    \begin{minipage}[b]{0.48\textwidth}
        \centering
        \begin{subfigure}{\linewidth}
            \centering
            \includegraphics[width=\linewidth]{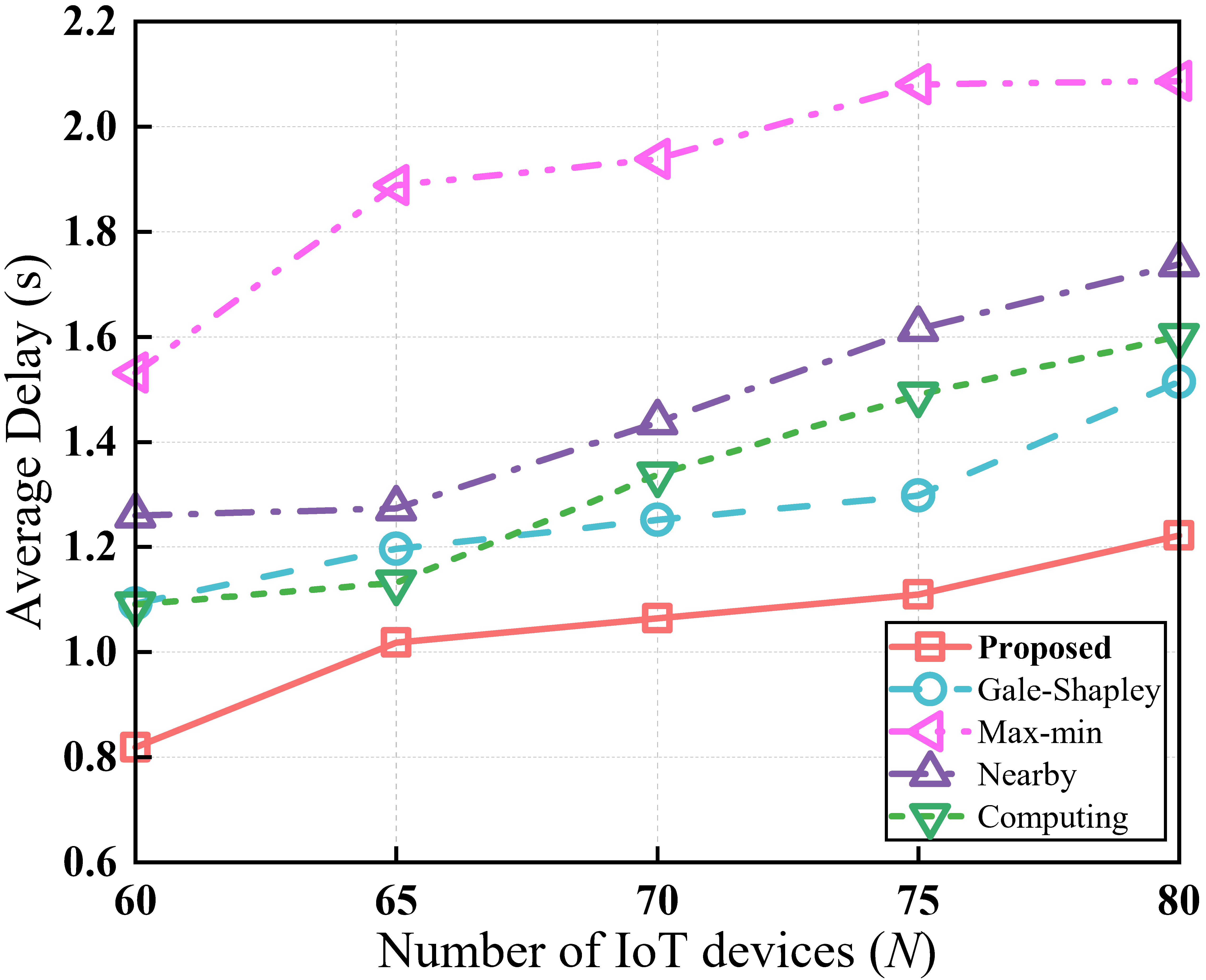}
            \vspace{-1.5em}
            \caption{Average delay versus number of IoT devices.}
        \end{subfigure}
    \end{minipage}
    \caption{\small Total and average delay performance under different IoT device access scenarios.}
    \label{fig_total_delay_average_delay_vs_IoT_device_number}
\end{figure*}

\par As shown in Fig.~\ref{fig_total_power_average_power_vs_IoT_device_number}, we further investigate the total power consumption and average power consumption corresponding to different algorithms under various IoT device access scenarios. Under the set conditions of $M = 4$ and $G = 5$, it can be observed that the system's total power consumption increases with the number of IoT devices. However, our proposed algorithm consistently maintains the lowest power consumption overhead, demonstrating its superior energy efficiency. Specifically, when $N = 80$, the proposed algorithm saves approximately 6.7\% of power consumption compared to the Gale-Shapley, which has the second-best delay performance in Fig.~\ref{fig_total_power_average_power_vs_IoT_device_number}. Furthermore, we study the average power consumption per IoT device under different IoT device access scenarios. It is evident that our proposed algorithm achieves significant improvements in average power consumption per IoT device compared to the Gale-Shapley, Max-Min, Nearby, and Computing. For example, in the scenario of IoT device access with DSCI tasks, the proposed algorithm achieves at least 14.7\% reduction in average power consumption per IoT device.

\begin{figure*}[t]
    \centering
    \begin{minipage}[b]{0.47\textwidth}
        \centering
        \begin{subfigure}{\linewidth}
            \centering
            \includegraphics[width=\linewidth]{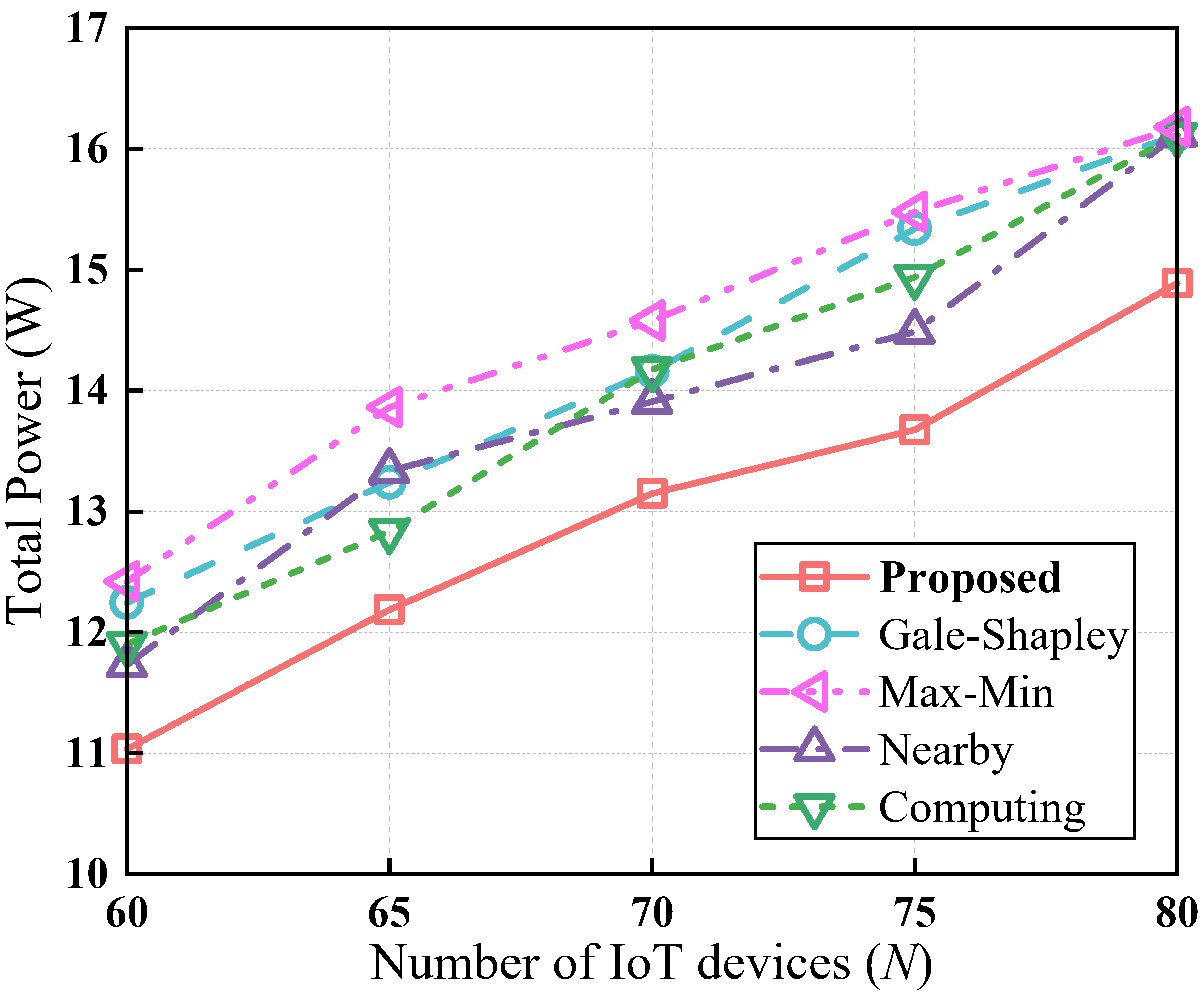}
            \caption{Total power versus number of IoT devices.}
        \end{subfigure}
    \end{minipage}
    \hspace{0.005\textwidth}
    \begin{minipage}[b]{0.49\textwidth}
        \centering
        \begin{subfigure}{\linewidth}
            \centering
            \includegraphics[width=\linewidth]{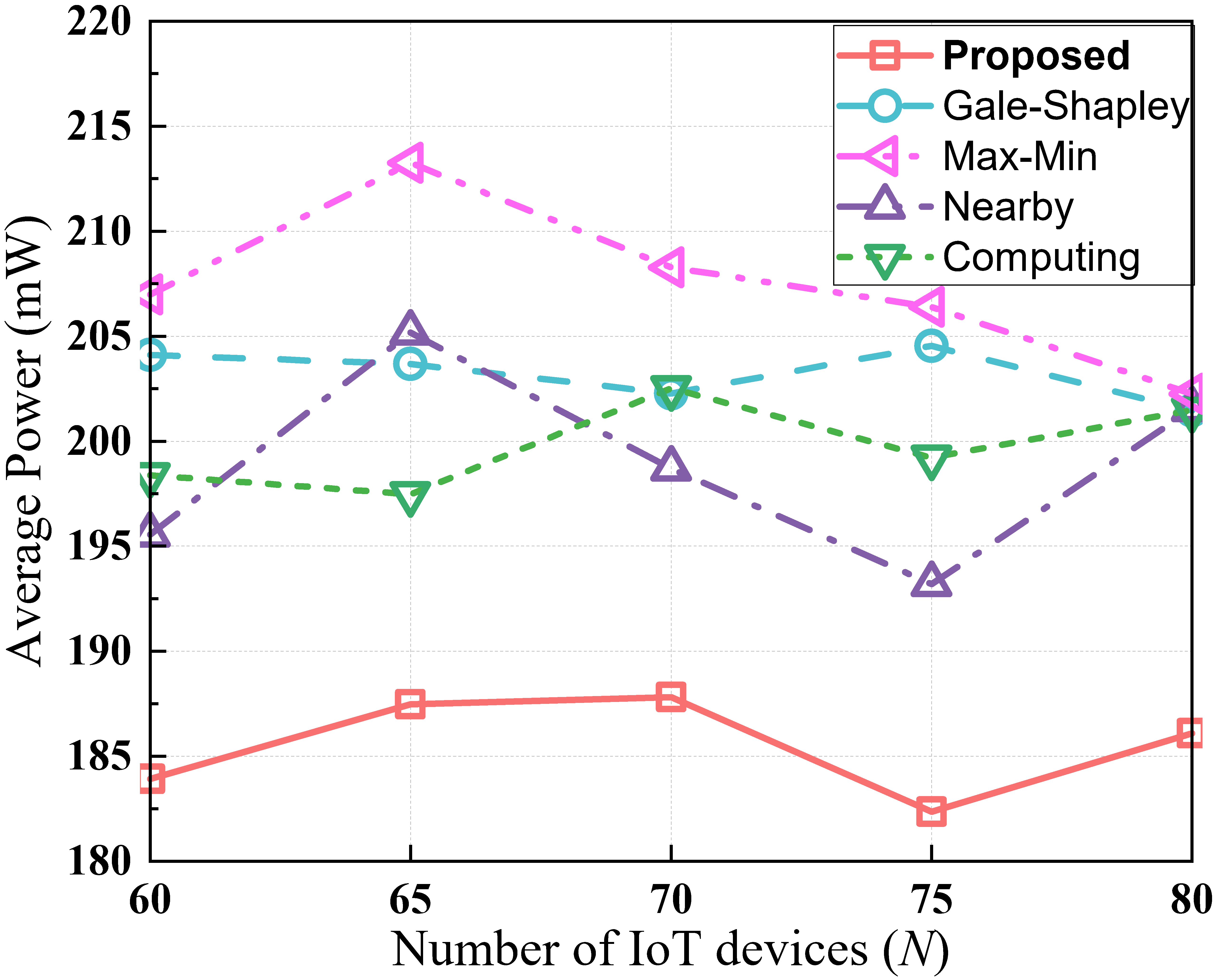}
            \vspace{-1.5em}
            \caption{Average power versus number of IoT devices.}
        \end{subfigure}
    \end{minipage}
    \caption{\small Total and average power consumption performance under different IoT device access scenarios.}
    \vspace{-1.5em}
    \label{fig_total_power_average_power_vs_IoT_device_number}
\end{figure*}

\par We provide a more detailed analysis of the performance gains in delay and power consumption achieved by our proposed algorithm, as shown in Fig.~\ref{fig_total_delay_average_delay_vs_IoT_device_number} and Fig.~\ref{fig_total_power_average_power_vs_IoT_device_number}. Unlike the comparison algorithms, the proposed algorithm allows for a flexible number of IoT devices in different groups, enabling more dynamic decision-making. In the grouping process, our algorithm takes into account not only the device's channel conditions but also intra-group interference, interference between different BSs, and the power allocation strategy, making the IoT device offloading and grouping decisions more superior than the selected schemes based solely on channel conditions or distance. Moreover, in our algorithm, IoT devices dynamically determine their own strategies, unlike other comparison algorithms that determine all devices' strategies at once. This allows our approach to fully consider the impact of other devices on each device's performance. In contrast, algorithms like Gale-Shapley, Nearby, and Max-Min primarily prioritize based on device channel conditions, neglecting the significant effects of intra-group interference and power distribution on throughput (and thus on delay). The offloading strategies of the selected Nearby may result in too many tasks being offloaded to the same BS-MEC, leading to an imbalanced computational load and a significant increase in intra-group interference, which impacts throughput and increases the system's computational burden. The Computing algorithm, on the other hand, mainly focuses on the computation requirements of device tasks, overlooking the impact of intra-group interference, power allocation, and other key factors. By comprehensively considering factors such as channel conditions, intra-group and inter-base station interference, and power allocation, our algorithm presents a more refined and efficient task offloading and grouping strategy, significantly enhancing system performance and ensuring lower delays and higher computational efficiency.

\begin{figure*}[t]
    \centering
    \begin{minipage}[b]{0.47\textwidth}
        \centering
        \begin{subfigure}{\linewidth}
            \centering
            \includegraphics[width=\linewidth]{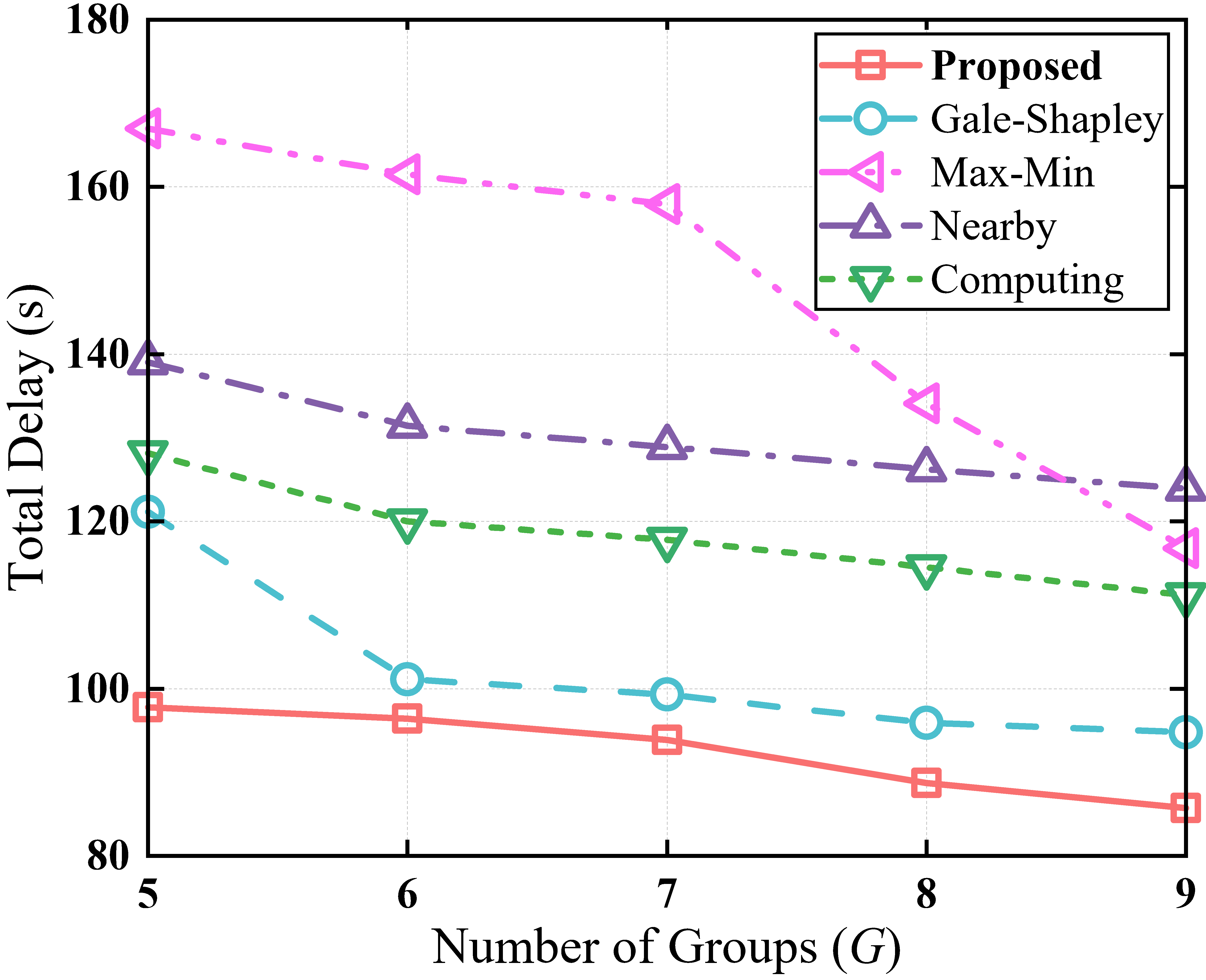}
            \caption{Total delay versus number of groups.}
        \end{subfigure}
    \end{minipage}
    \hspace{0.005\textwidth}
    \begin{minipage}[b]{0.49\textwidth}
        \centering
        \begin{subfigure}{\linewidth}
            \centering
            \includegraphics[width=\linewidth]{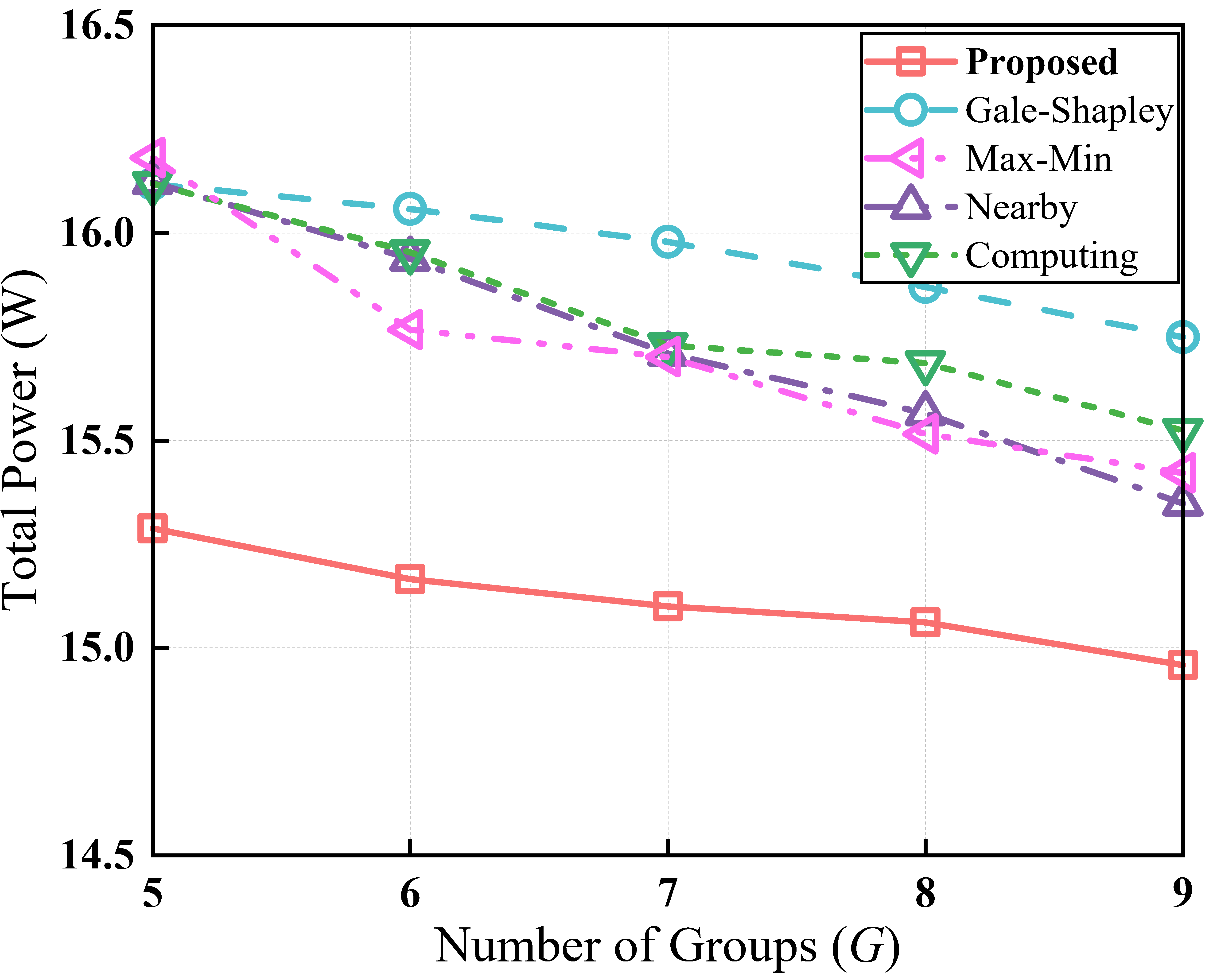}
            \vspace{-1.5em}
            \caption{Total power versus number of groups.}
        \end{subfigure}
    \end{minipage}
    \caption{\small Performance variations of system's total delay and total power consumption under different number of groups in IoT access scenarios with DSCI tasks. ($N = 80$).}
    \vspace{-1.5em}
    \label{fig_total_delay_total_power_vs_group_numbers}
\end{figure*}

\par As shown in Fig.~\ref {fig_total_delay_total_power_vs_group_numbers}, we studied the performance of the system's total delay and total power consumption under different numbers of groups $G$ (i.e., number of sub-channels) under large-scale access of IoT devices ($N = 80$). As shown in Fig.~\ref {fig_total_delay_total_power_vs_group_numbers} (a), with the increase of the number of groups, the delay performance corresponding to different algorithms has been correspondingly improved. However, our proposed algorithm achieves the best latency performance compared to Gale Shapley, Max Min, Nearby, and Computing. This is mainly because, as the number of sub-channels increases, IoT devices offloaded to the same BS-MEC will have more group options, resulting in a corresponding decrease in the number of IoT devices within each group. This change means that intra-group interference will be improved to a certain extent, which in turn will lead to an increase in throughput and ultimately improve the latency performance of the system.

\par As shown in Fig.~\ref{fig_total_delay_total_power_vs_group_numbers} (b), this paper also investigates the impact of changes in the number of groups on total power consumption. It can be observed that as the number of groups increases, the total power consumption of the system shows a decreasing trend. This is because the increase in the number of groups leads to a decrease in the number of devices in each group, thereby reducing intra-group interference. At this point, lower power can be allocated to meet the delay requirements of each device. This result indicates that the algorithm proposed in this paper can not only effectively reduce system latency compared to the comparative algorithm, but also reduce system power consumption.

\begin{figure}[htbp]
\centering
\includegraphics[scale=0.42]{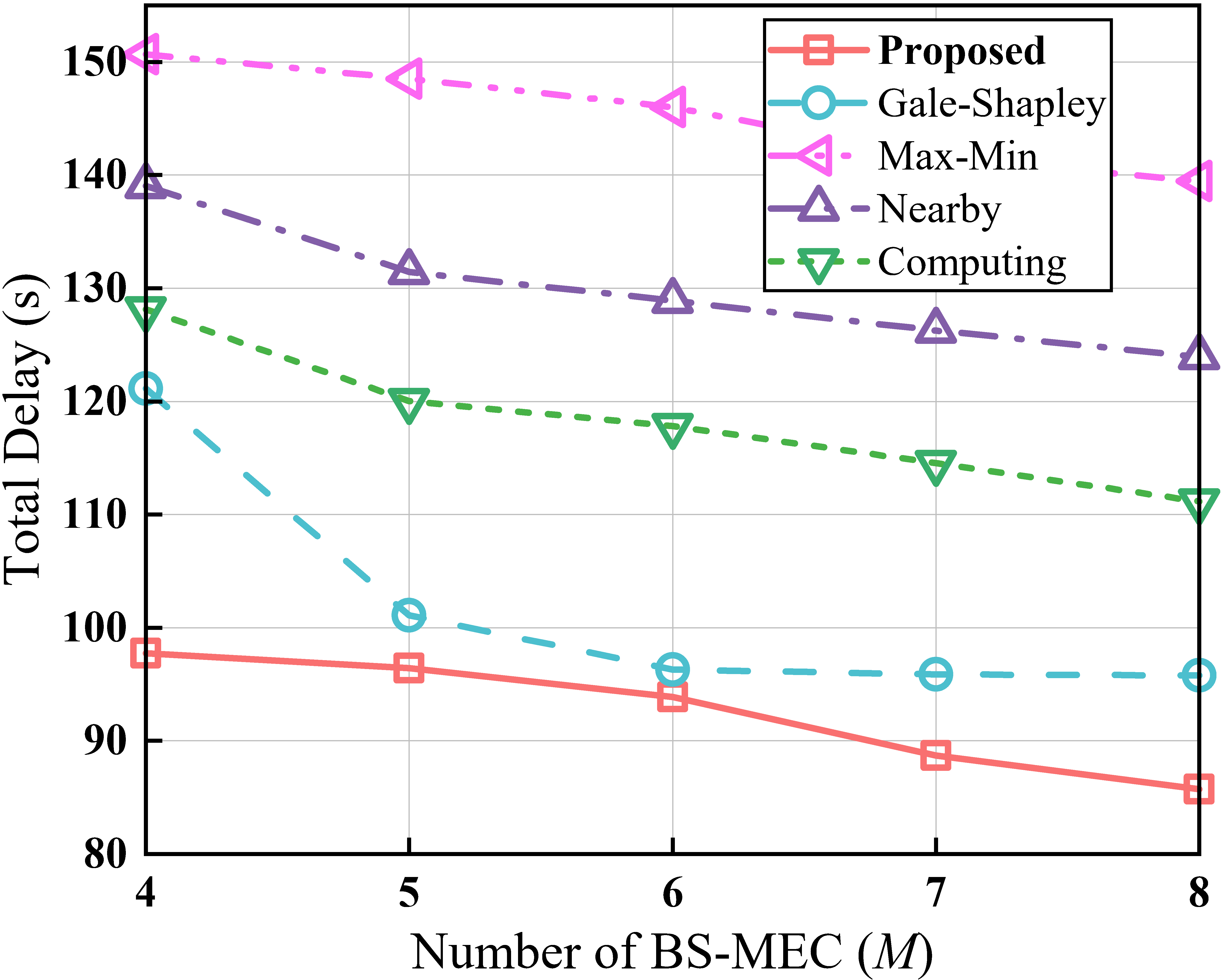}
\caption{\small Performance variations of system's total delay under different number of BS-MECs in IoT access scenarios with DSCI tasks ($N = 80$).} %
\label{fig_total_delay_versus_BS_MEC}
\end{figure}

\par As shown in Fig.~\ref{fig_total_delay_versus_BS_MEC}, we demonstrate the impact of varying the number of BS-MECs on the system's total delay performance in IoT access scenarios with DSCI tasks. It can be observed that as the number of BS-MECs increases, the total delay performance corresponding to different algorithms improves accordingly. Notably, the proposed algorithm consistently maintains the best delay performance compared to the benchmark algorithms, including Gale-Shapley, Max-Min, Nearby, and Computing, when the number of BS-MECs increases from $M = 4$ to $M = 8$. However, for the Gale-Shapley algorithm, its delay improvement becomes less significant when $M \geq 6$, primarily due to the inherent design of the Gale-Shapley algorithm itself. Initially, as the number of BS-MECs increases, the system benefits from more distributed computing resources and reduced load per BS, thereby improving the system's delay performance. However, given a fixed number of IoT devices, once the number of BS-MECs exceeds a certain threshold, further increasing the number of BSs does not significantly alleviate the load on the BS-MEC nodes, because the number of IoT devices $N$ and the number of sub-channels $G$ are fixed. The Gale-Shapley algorithm prioritizes allocation based on channel gains, which may lead to an uneven distribution of devices among BSs, especially when high-gain devices are concentrated in a few BSs with better channel conditions. This concentration results in continuous intra-group interference and causes some BSs to become overloaded, creating computational bottlenecks that offset the potential delay reduction benefits of adding more BSs.

\begin{figure}[htbp]
\centering
\includegraphics[scale=0.42]{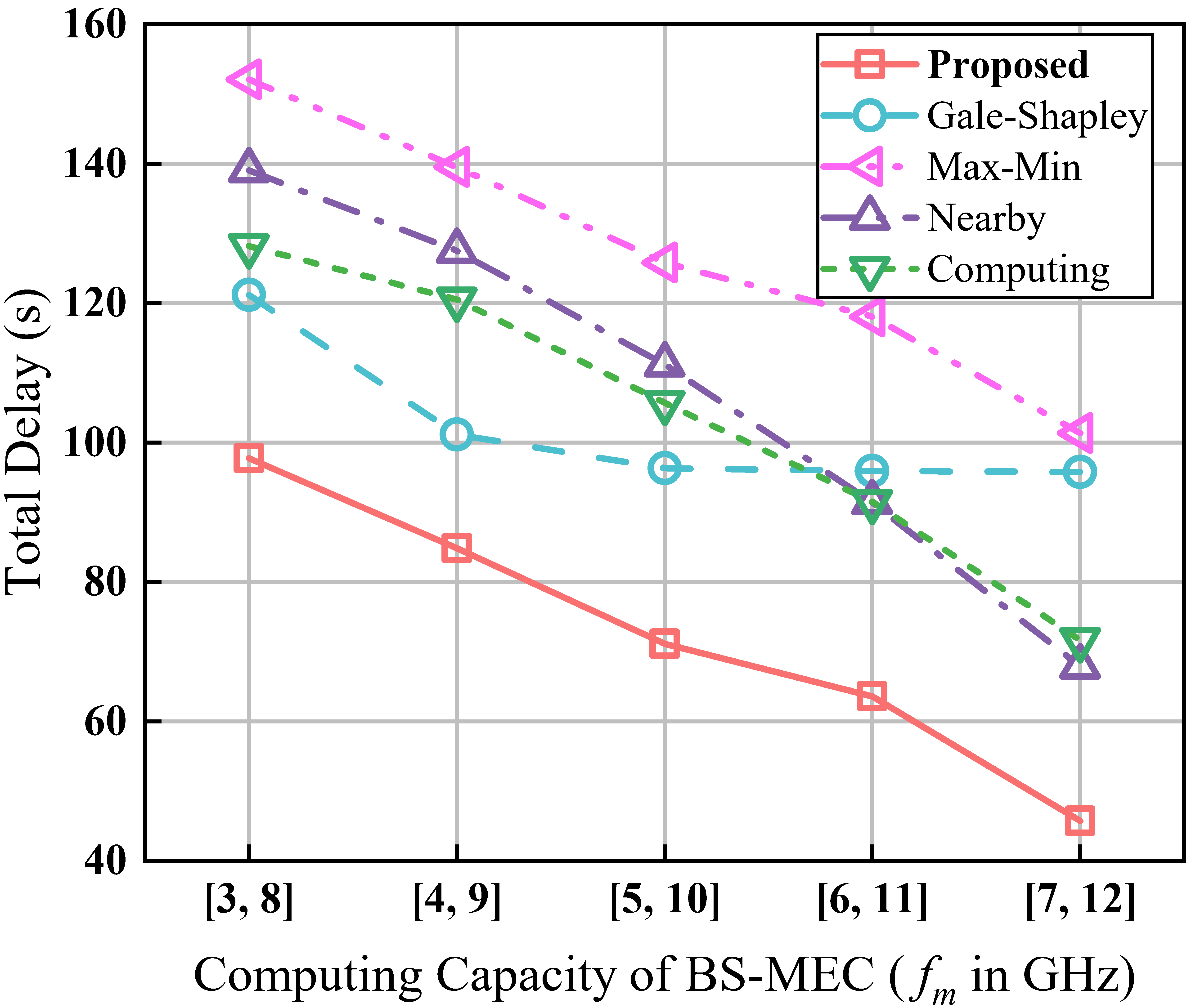}
\caption{\small The impact of BS-MEC's computing capacity on system's total delay performance in IoT access scenarios with DSCI tasks.}
\vspace{-1.5em}
\label{fig_total_delay_versus_BS_MEC_computing_capacity}
\end{figure}

\par As shown in Fig.~\ref{fig_total_delay_versus_BS_MEC_computing_capacity}, we investigated the impact of computing capacity of BS-MEC on the system's total delay performance in IoT access scenarios with DSCI tasks. It can be observed that as the computing capacity of each BS-MEC increases, the system's total delay corresponding to different algorithms improves accordingly. Notably, it can be noticed that when the computing capacity of BS-MECs reaches a certain range, the delay performance gain of the Gale-Shapley algorithm similarly reaches a bottleneck. This is mainly because the preference-based matching approach of the Gale-Shapley algorithm leads to an uneven distribution of devices, where high-gain devices tend to concentrate in groups associated with BS-MECs that have favorable channel conditions. Therefore, even with increased computing resources, these groups with those overloaded BS-MEC still experience severe intra-group interference and resource contention, which limits throughput and maintains high communication delay. This imbalance prevents a significant reduction in delay despite enhanced computing capabilities. In addition, it can be observed that under different ranges of BS-MEC computing capacities, the proposed algorithm consistently maintains superior performance gains compared to other benchmark algorithms, which fully demonstrates the effectiveness of our proposed algorithm.

\section{Conclusion and Future Outlook}
This paper proposed an uplink NOMA-assisted multi-BS-MEC network to fulfill the QoS requirements of massive IoT device connectivity with DSCI tasks. To effectively address the challenges of unbalanced subchannel access, inter-group interference, computational load disparities, and device heterogeneity in multi-BS MEC systems, this paper leverages game theory to propose an innovative joint task offloading and user grouping algorithm that dynamically makes optimal offloading-grouping decisions based on the channel states of individual IoT devices. Secondly, this paper introduces an MM-based power allocation algorithm that transforms the original power allocation problem into a tractable convex optimization paradigm, thereby deriving the optimal power allocation strategy. Moreover, an efficient iterative optimization algorithm is further designed to minimize the total system delay. Extensive simulation experiments have verified the effectiveness of our proposed optimization schemes, demonstrating performance improvements of up to 19.3\% and 14.7\% in terms of the system's total delay and power consumption, respectively.

\par Although the current work considers an uplink NOMA-assisted multi-MEC framework for large-scale IoT networks with DSCI tasks, each device is restricted to associating with a single BS, which severely limits mobility support. In future work, we will incorporate service migration mechanisms, enabling devices to dynamically offload tasks across multiple BSs and make optimal handover decisions based on channel conditions, MEC loads, and user mobility trajectories. This is expected to significantly enhance task offloading success rates, transmission reliability, and overall system performance in high-mobility scenarios.

\vspace{-0.5em}

\footnotesize
\bibliographystyle{IEEEtranN}
\bibliography{IEEEabrv,ref}

\vspace{-0.8cm}

\begin{IEEEbiography}[{\includegraphics[width=1in,height=1.25in,clip,keepaspectratio]{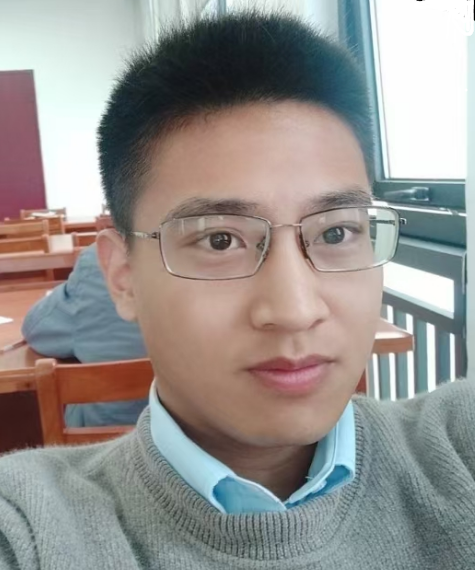}}]{Yuang Chen (Graduate Student Member, IEEE)}
	received the B.S. degree from the Hefei University of Technology (HFUT), Hefei, China, in 2021. He is currently pursuing a doctoral degree in Electronic Engineering and Information Science at the University of Science and Technology of China, Hefei, China. In addition, he is currently working as a full-time research assistant in the department of computing at The Hong Kong Polytechnic University, Hong Kong SAR, China. His research interests include 5G/6G wireless network technology, such as next-generation URLLC, next-generation multiple access technology, wireless network resource allocation and performance optimization, microservice deployment and scheduling, etc.
\end{IEEEbiography}

\vspace{-1cm}

\begin{IEEEbiography}[{\includegraphics[width=1in,height=1.25in,clip,keepaspectratio]{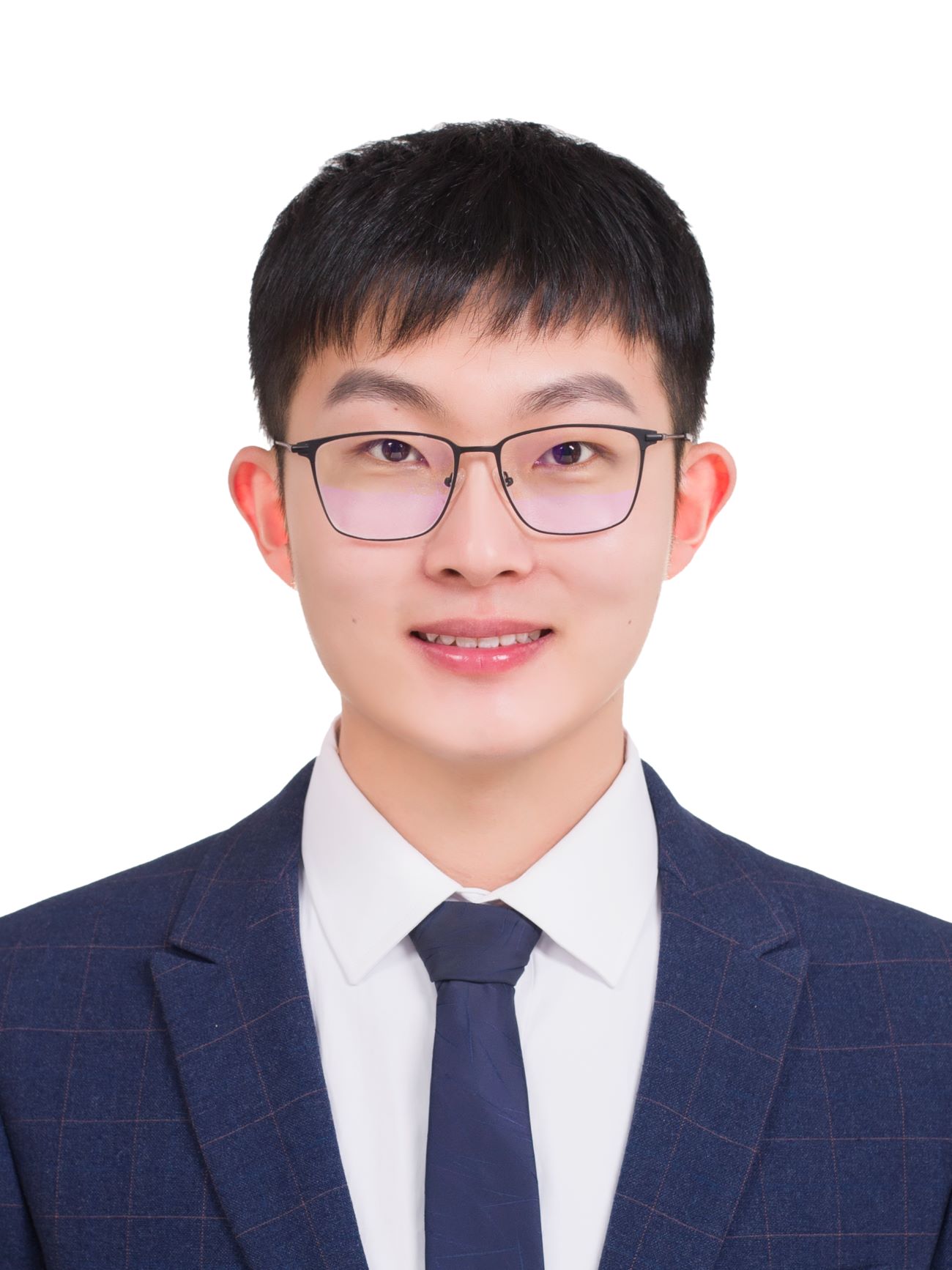}}]{Fengqian Guo}
	received the Ph.D. degree in communication and information systems from the University of Science and Technology of China (USTC), Hefei, China, in 2022. He is currently a associate researcher at the University of Science and Technology of China. His research interests include wireless low-latency transmission and wireless resource optimization.
\end{IEEEbiography}

\vspace{-1cm}

\begin{IEEEbiography}[{\includegraphics[width=1in,height=1.25in,clip,keepaspectratio]{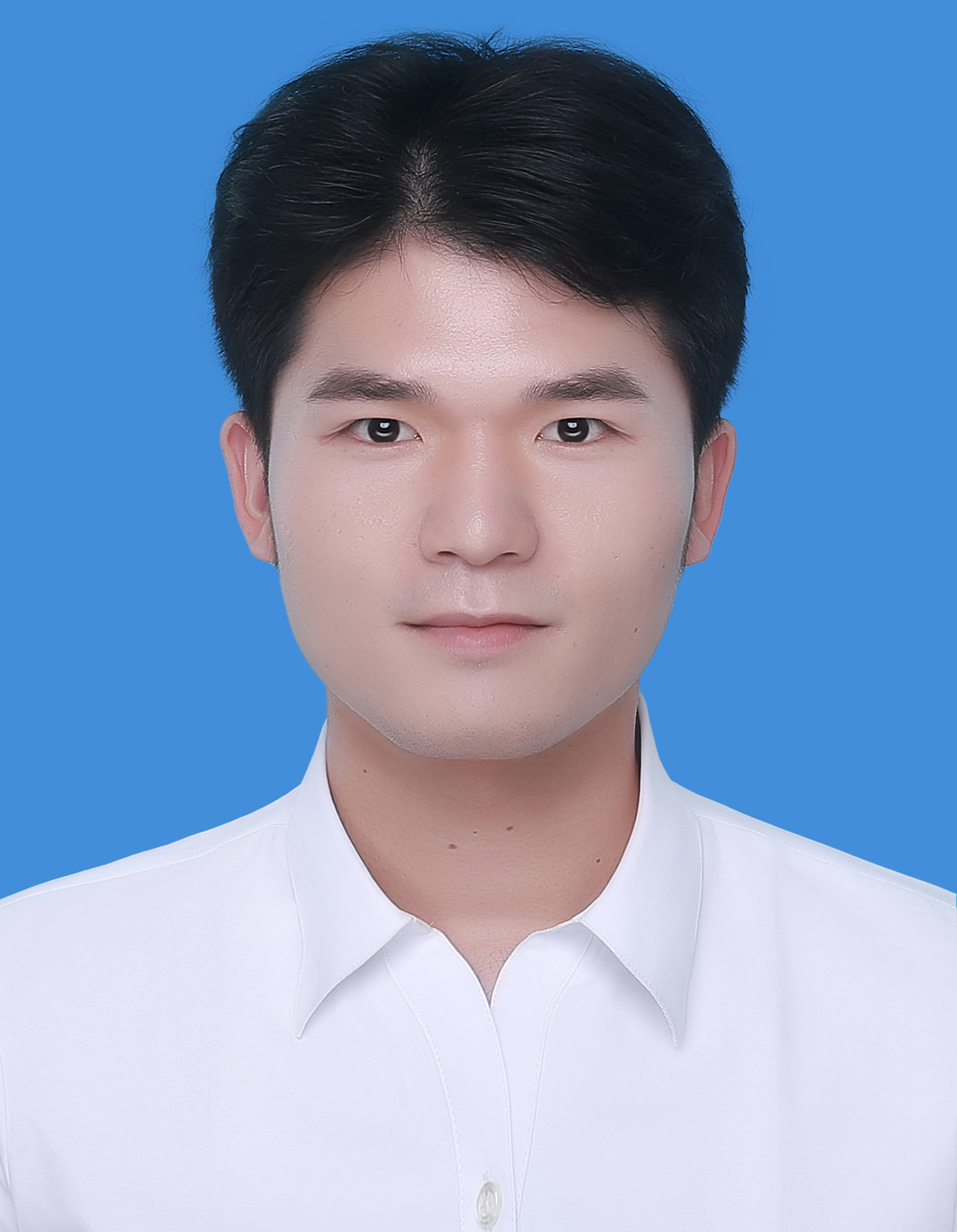}}]{Chang Wu}
	received the B.S. degree from the Dalian Maritime University (DLMU), Dalian, China, in 2021. He is currently working toward the PhD degree in communication and information systems with the Department of Electronic Engineering and Information Science, University of Science and Technology of China (USTC), Hefei, China. His research interests include 5G/6G wireless network technologies, such as architecture, QoS/QoE provision for business transmission and Deep Reinforcement Learning in performance optimization.
\end{IEEEbiography}

\vspace{-2cm}

\begin{IEEEbiography}[{\includegraphics[width=1in,height=1.25in,clip,keepaspectratio]{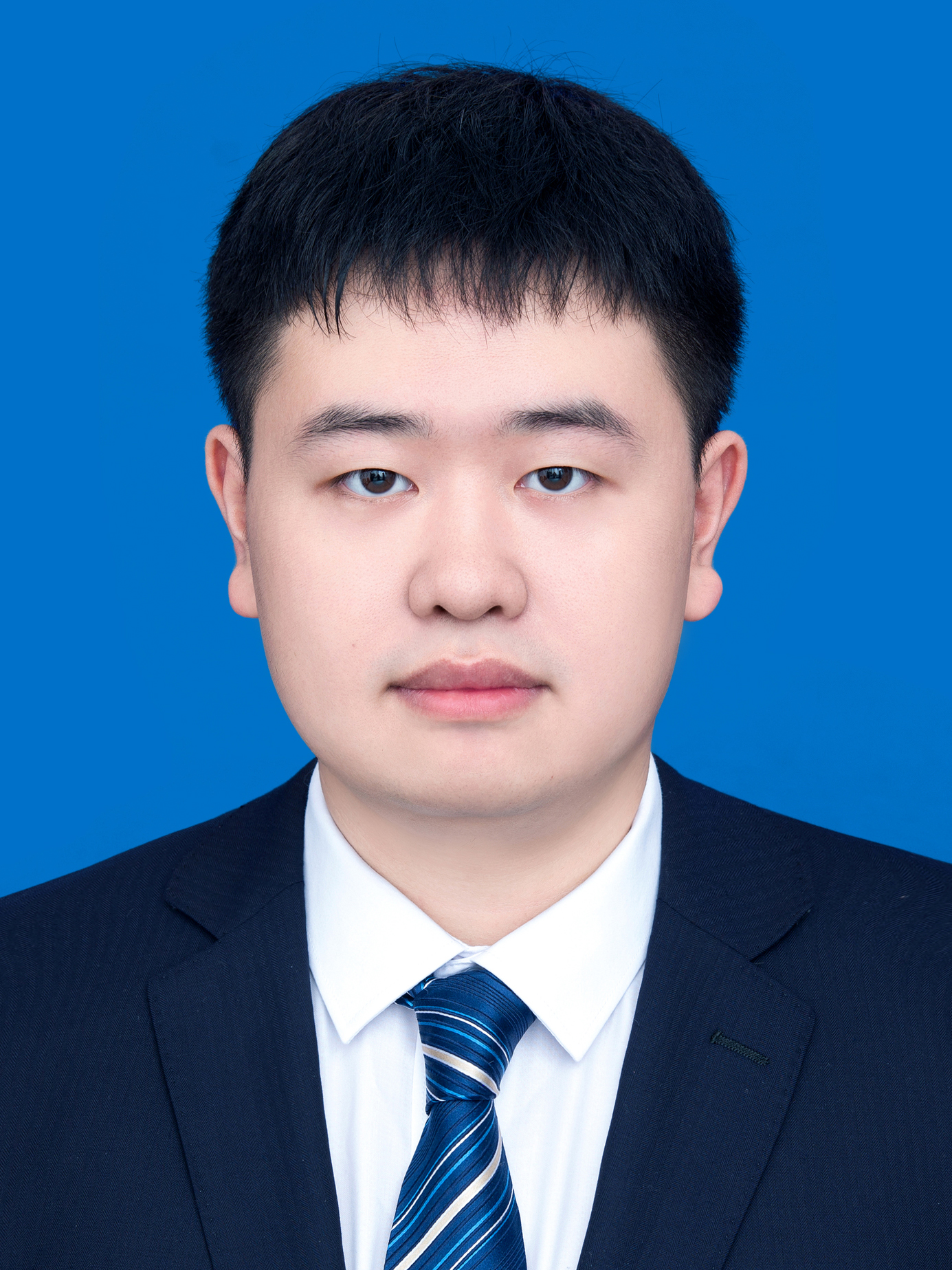}}]{Mingyu Peng}
	received the B.S. degree from the Hefei University of Technology (HFUT), Hefei, China, in 2025. He is currently pursuing the master’s degree with the Department of Electronic Engineering and Information Science, University of Science and Technology of China (USTC), Hefei, China. His research interests include multimedia communication, wireless edge networks, and deep reinforcement learning.
\end{IEEEbiography}

\vspace{-1.3cm}

\begin{IEEEbiography}[{\includegraphics[width=1in,height=1.25in,clip,keepaspectratio]{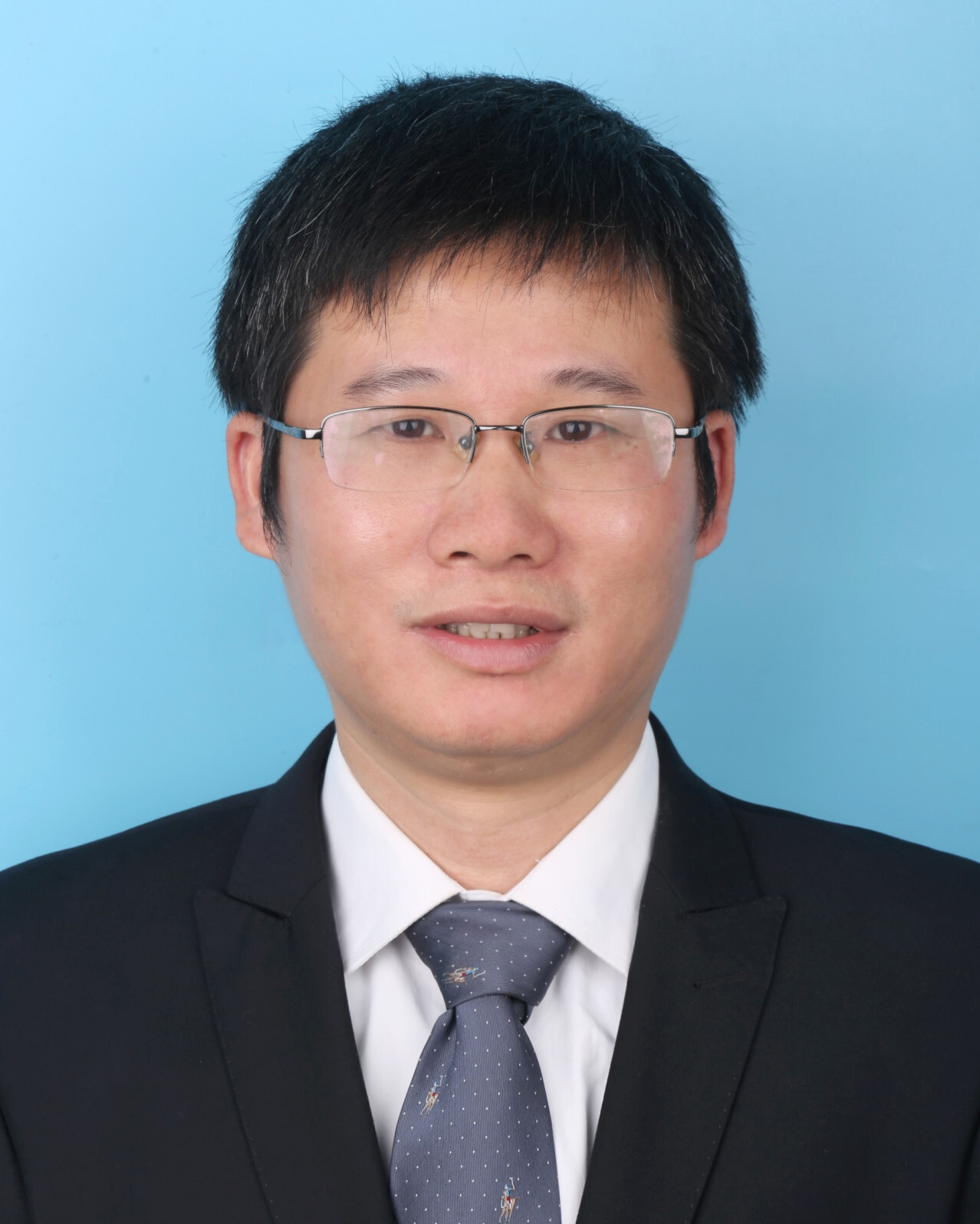}}]{Hancheng Lu (Senior Member, IEEE)} received his Ph.D. in communication and information systems from the University of Science and Technology of China, Hefei, China, in 2005. He is currently a tenured professor in the Department of Electronic Engineering and Information Science at the University of Science and Technology of China. He is also working at the Hefei National Comprehensive Science Center Artificial Intelligence Research Institute, Hefei, China. He has rich research experience in multimedia communication, wireless edge networks, future network architecture and protocols, as well as machine learning algorithms for network communication, involving scheduling, resource management, routing, transmission, and other fields. In the past 5 years, more than 80 papers have been published in top journals such as IEEE Trans and flagship conferences such as IEEE INFOCOM, and have won the Best Paper Award of IEEE GLOBECOM 2021 and the Best Paper Award of WCSP 2019 and WCSP 2016 in the field of communication. In addition, he currently serves as an editorial board member for numerous journals, including the IEEE Internet of Things Journal, China Communications, and IET Communications.
\end{IEEEbiography}

\vspace{-1cm}

\begin{IEEEbiography}[{\includegraphics[width=1.4in,height=1.3in,clip,keepaspectratio]{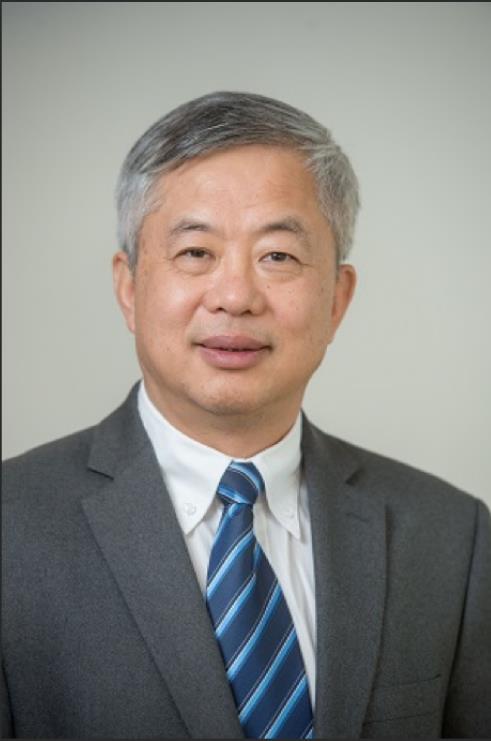}}]{Chang Wen Chen (Life Fellow, IEEE)} received the B.S. degree from the University of Science and Technology of China in 1983, the M.S.E.E. degree from the University of Southern California in 1986, and the Ph.D. degree from the University of Illinois at Urbana–Champaign in 1992. He is currently the Chair Professor in the Department of Visual Computing and the Interim Dean of the School of Computer and Mathematical Sciences at The Hong Kong Polytechnic University. Before his current position, he served as the Dean of the School of Science and Engineering, The Chinese University of Hong Kong, Shenzhen, from 2017 to 2020. He was an Empire Innovation Professor with the University at Buffalo, The State University of New York, from 2008 to 2021. He was an Allen Henry Endow Chair Professor with Florida Institute of Technology from 2003 to 2007. He was a Faculty Member of electrical and computer engineering at the University of Rochester from 1992 to 1996 and at the University of Missouri-Columbia from 1996 to 2003. His research has been funded by both government agencies and industrial corporations. His research interests include multimedia communication, multimedia systems, mobile video streaming, the Internet of Video Things (IoVT), image/video processing, computer vision, deep learning, multimedia signal processing, and immersive mobile video. He was a SPIE Fellow in 2007 and an Elected Member of Academia Europaea in 2021. He and his students have received ten best paper awards or best student paper awards over the past two decades. He received several research and professional achievement awards, such as the Sigma Xi Excellence in Graduate Research Mentoring Award in 2003, the Alexander von Humboldt Research Award in 2010, the University at Buffalo Exceptional Scholar—Sustained Achievement Award in 2012, the SUNY System Chancellor's Award for Excellence in Scholarship and Creative Activities in 2016, the University of Illinois ECE Distinguished Alumni Award in 2019, the Outstanding Overseas Contributor of the China Society of Image and Graphics (CSIS) in China MM 2024, and the SIGMM Outstanding Technical Achievement Award in 2024. He is currently an Associate Editor-in-Chief of IEEE TRANSACTIONS ON BIOMETRICS, BEHAVIOR, AND IDENTITY SCIENCE and a Deputy Editor-in-Chief of the IEEE TRANSACTIONS ON IMAGE PROCESSING. He served as the conference chair for several major IEEE, ACM, and SPIE conferences related to multimedia communications and signal processing. He served as the Editor-in-Chief for IEEE TRANSACTIONS ON MULTIMEDIA from January 2014 to December 2016 and IEEE TRANSACTIONS ON CIRCUITS AND SYSTEMS FOR VIDEO TECHNOLOGY from January 2006 to December 2009. He has been an Editor of several other major IEEE TRANSACTIONS and journals, including PROCEEDINGS OF THE IEEE, IEEE JOURNAL OF SELECTED TOPICS IN SIGNAL PROCESSING, IEEE JOURNAL OF SELECTED AREAS IN COMMUNICATIONS, and IEEE JOURNAL OF SELECTED TOPICS IN SIGNAL PROCESSING.
\end{IEEEbiography}

\end{document}